\newtheorem{lemma}{Lemma}
\def\BibTeX{{\rm B\kern-.05em{\sc i\kern-.025em b}\kern-.08em
    T\kern-.1667em\lower.7ex\hbox{E}\kern-.125emX}}
\begin{document}
\title{Positioning Using Visible Light Communications: A Perspective Arcs Approach
}
\author{Zhiyu Zhu, Caili Guo, ~\IEEEmembership{Senior Member,~IEEE}, Rongzhen Bao, Mingzhe Chen,  ~\IEEEmembership{Member,~IEEE},
Walid Saad ~\IEEEmembership{Fellow,~IEEE}, and Yang Yang, ~\IEEEmembership{Member,~IEEE}

\thanks{This research was supported, in part, by BUPT Excellent Ph.D. Students Foundation under Grant CX2021111, National Natural Science Foundation of China  under Grant 61901047 and 61871047, Beijing Natural Science Foundation under Grant 4204106, Fundamental Research Funds for the Central Universities under Grant 2021RC03, Proof-of-concept project of Zhongguancun Open Laboratory under Grant 202103001, and the U.S. National Science Foundation under Grant CNS-1909372.}
\thanks{Z. Zhu, R. Bao and Y. Yang are with the Beijing Key Laboratory of Network System Architecture and Convergence, School of Information and Communication Engineering, Beijing University of Posts and Telecommunications, Beijing 100876, China (e-mail: zhuzy@bupt.edu.cn; baorongzhen@bupt.edu.cn; yangyang01@bupt.edu.cn).
}
\thanks{M. Chen is with the Department of Electrical Engineering, Princeton University, Princeton, NJ, 08544, USA (e-mail: mingzhec@princeton.edu).
}
\thanks{C. Guo is with Beijing Laboratory of Advanced Information Networks, School of Information and Communication Engineering, Beijing University of Posts and Telecommunications, Beijing 100876, China (e-mail: guocaili@bupt.edu.cn).
}
\thanks{W. Saad is with the Wireless@VT, Bradley Department of Electrical and Computer Engineering, Virginia Tech, Blacksburg, VA, 24060, USA, (email: walids@vt.edu).
}
\thanks{A preliminary version of this work \cite{V-PCA} appears in the proceeding of IEEE International Conference on Communications (ICC).}

}
\maketitle
\begin{abstract}
Visible light positioning (VLP) is an accurate indoor positioning technology that uses luminaires as transmitters. In particular, circular luminaires are a common source type for VLP, that are typically treated only as point sources for positioning, while ignoring their geometry characteristics.
In this paper, the arc feature of the circular luminaire and the coordinate information obtained via visible light communication (VLC) are jointly used for positioning, and a novel perspective arcs approach is proposed for VLC-enabled indoor positioning. The proposed approach does not rely on any inertial measurement unit and has no tilted angle limitaion at the user.
First, a VLC assisted perspective circle and arc algorithm (V-PCA) is proposed for a scenario in which a complete luminaire and an incomplete one can be captured by the user. Based on plane and solid geometry theory, the relationship between the luminaire and the user is exploited to estimate the orientation and the coordinate of the luminaire in the camera coordinate system.
Then, the pose and location of the user in the world coordinate system are obtained by single-view geometry theory. Considering the cases in which parts of VLC links are blocked, an anti-occlusion VLC assisted perspective arcs algorithm (OA-V-PA) is proposed. In OA-V-PA, an approximation method is developed to estimate the projection of the luminaire's center on the image and, then, to calculate the pose and location of the user. Simulation results show that the proposed indoor positioning algorithm can achieve a 90th percentile positioning accuracy of around 10 cm. Moreover, an experimental prototype is implemented to verify the feasibility. In the established prototype, a fused image processing method is proposed to simultaneously obtain the VLC information and the geometric information. Experimental results in the established prototype show that the average positioning accuracy is less than 5 cm for different tilted angles of the user.

\end{abstract}
\section{Introduction}\label{introduction}
With the surge of location-based services such as location tracking and  navigation, accurate indoor positioning has attracted increasing attention. However, existing indoor positioning technologies, such as WiFi, RFID, Ultra-Wide Band (UWB), and Bluetooth face the challenges of balancing the positioning accuracy and the implementation cost \cite{jiao2017visible}. Since visible light possesses strong directionality and low multipath interference \cite{Matheus2019visible, yang2019relay}, visible light positioning (VLP) with visible light communication (VLC) has emerged as a promising indoor positioning technique with low cost and high positioning accuracy \cite{surveyVLP2016}. 

\subsection{Related Works}
VLP algorithms can be grouped into several categories including: proximity \cite{proximity2010}, fingerprinting \cite{fingerprint2018, bakar2020accurate, alam2018accurate}, received signal strength (RSS) \cite{CARSSR2019bai,RSSfunction, zhou2019performance} , time of arrival (TOA) \cite{TOA2013}, angle of arrival (AOA) \cite{AOA2018, soner2021visible} and image sensing \cite{imageVLP2018, fang2017high}. Proximity is the simplest positioning technique. However, it can only provide a proximity location information, whose accuracy largely depends on the density of the transmitter distribution \cite{zhuang2018survey}. Both proximity and fingerprinting can estimate the user's location using a single luminaire.
The use of fingerprinting can achieve better positioning accuracy than proximity, however, it is susceptible to environment changes \cite{Luo2017indoor}. RSS algorithms determine the location of the user based on the power of the received signal, and hence, they depend on accurate channel models \cite{RSSfunction}. Meanwhile, TOA and AOA algorithms require complicated hardware devices to ensure the positioning accuracy \cite{surveyVLP2016}. In contrast to the aforementioned algorithms that use photodiodes (PD) to receive signals, image sensing algorithms use cameras to receive visible light signals, and they determine the location of a given user by analyzing the geometric relations between LEDs and their projections on the image plane \cite{guan2019high,bai2021computer}.

Most existing VLP algorithms achieve positioning by treating indoor luminaires as point sources in indoor scenarios. 
Hence, they typically need three or more luminaires for positioning \cite{bai2020RP3P,pergoloni2017metameric,yang2021multi}. The performance of positioning may be limited by the filed of view (FOV) of the user. These approaches neglect the geometric features that the luminaires possess, such as circle and arc features, which can also be used in positioning algorithms and, ultimately, help reduce the number of required luminaires.
Recently, the works in \cite{weakprojection2017PJ, IMU2019, wang2021arbitrarily, partIMU2019PJ} studied circular luminaires based VLP using image sensing. In particular, in \cite{weakprojection2017PJ}, the authors assumed a weak perspective projection and marked a margin point on a circular luminaire to estimate the pose and the location of the user.
In \cite{IMU2019} and \cite{ wang2021arbitrarily}, the authors proposed positioning models based on inertial measurement unit (IMU) of the camera. \cite{IMU2019} proposed a planes-intersection-line positioning scheme by approximating the projected radius of the luminaire, and \cite{wang2021arbitrarily} introduced a center detecting method by using a boundary fitting method.
The work in \cite{partIMU2019PJ} used an IMU and weak perspective projection model to estimate the pose of the user, and the authors showed that this approach improves the accuracy of positioning compared with \cite{weakprojection2017PJ} by calibrating the measurement of IMU so as to reduce the azimuth angle error. Indeed, for the weak perspective projection, a large or small tilted angle can lead to inaccurate approximation, thus resulting in positioning error. The positioning schemes in \cite{IMU2019, wang2021arbitrarily, partIMU2019PJ} relied on an extra device called the IMU which is known to have inherent measurement errors due to the influence of magnetic field \cite{IMU2019}.

Overall, existing VLP approaches based on circular luminaires typically require extra devices or assume an ideal projection model, which limits their practical applications. Therefore, there is a need for a more practical indoor positioning algorithm based on circular luminaires, which can achieve high positioning accuracy without tilted angle limitation at the user, and without using an IMU.
%

%
\subsection{Contributions}
The main contribution of this paper is a new perspective arcs method for VLP that is applied in practical application scenarios.
In particular, when a complete circular luminaire and an incomplete one are captured, a perspective circle and arc positioning algorithm assisted by VLC (V-PCA) is proposed.
Meanwhile, when two incomplete circular luminaires are captured, an anti-occlusion perspective arcs algorithm assisted by VLC (OA-V-PA) is proposed for indoor positioning.
To authors' best knowledge, this is the first VLP method for circular luminaires that can achieve accurate and practical positioning without IMU and tilted angle limitations at the user.
Our key contributions are summarized as follows:
\begin{itemize}
  \item We propose an indoor positioning system with a circular luminaires layout. In the system, a perspective circle and arc algorithm, dubbed V-PCA, is proposed, and the user equipped with a single camera can be located. In V-PCA, the geometric features consisting of a complete circle and an arc are exploited to obtain the normal vector of the luminaire first. Then, we obtain the projections of the center and a mark point of circular luminaire on the image plane based on solid geometry theory. We finally estimate the location and pose of the user based on single-view geometry theory by analyzing the normal vector and the geometric information of the center and the mark point.
  \item To further enhance the practicality of V-PCA, we propose an anti-occlusion VLC assisted perspective arcs algorithm (OA-V-PA), which can achieve efficient positioning even when some of the VLC links are blocked. In OA-V-PA, only two arcs extracted from two incomplete luminaires are required to estimate the location and pose of the user. We also show that the positioning system does not need IMU and has no tilted angle limitations at the user.
  \item We implement both simulations and practical experiments to verify the performance of V-PCA and OA-V-PA. In the practical experiments, we use a mobile phone as the receiver, and we propose a fused image processing scheme to receive the visible light signals and detect the contour of the luminaire’s projection accurately. Finally, we establish an experimental prototype which can achieve accurate and real-time positioning.
\end{itemize}

Simulation results show that the proposed algorithms can achieve a 95th percentile location accuracy of less than 10 cm, and experimental results show that the proposed algorithm can achieve an average accuracy within 5 cm.

The rest of the paper is organized as follows. Section II introduces the system model. The proposed V-PCA algorithm is presented in Section III, and the proposed OA-V-PA algorithm is detailed in Section IV. Section V introduces the implementation of the algorithm. Then, simulation and experimental results are presented and discussed in Section VI. Finally, conclusions are drawn in Section VII.

Hereinafter, we adopt the following notations: both matrices and coordinates of points are denoted by capital boldface such as $\boldsymbol{R}_{\rm{c}}^{\rm{w}}$ and $\boldsymbol{P}_i^{\rm{w}}$, and they can be inferred easily by context. Vectors are denoted by $\overrightarrow {(\cdot)}$ such as $\overrightarrow {GM}$, or small boldface such as $\boldsymbol{n}_{\text{LED}}^{\text{c}}$. In addition, we use ${(\cdot)'}$ to denote projection on the image plane. For instance, $G_e'$ is the projection of $G_e$. Lines are defined and denoted by two points, such as $MN$. In addition, the definitions of key system parameters are listed in Table \ref{symbol}.

\begin{table*}[t]
		\centering \caption{summary of notations}\label{symbol}
            \footnotesize
		\begin{tabular}
{|p{0.1\textwidth}<{\centering}|p{0.3\textwidth}<{\centering}|p{0.1\textwidth}<{\centering}|p{0.3\textwidth}<{\centering}|}
\hline
			\multicolumn{1}{|c|} {\bf{Symbol}}& {\bf{Meaning}}& {\bf{Symbol}}& {\bf{Meaning}}\\
            \hline
            \hline
            $\left(u_o,v_o\right)^{\rm{T}}$ & Pixel coordinate of $O^{\rm{i}}$ & $d_x$, $d_y$ & Physical size of each pixel \\
            \hline
            $f$ & Focal length  & $\boldsymbol{Q}$ & Elliptic parameter matrix \\
            \hline
            $\boldsymbol{P}_{i}^{\text{w}}$/$\boldsymbol{P}_{i}^{\text{c}}$/$\boldsymbol{P}_{i}^{\text{a}}$ & Coordinate of point $i$ in WCS/CCS/ACS & $\Pi_e$/$\Pi_f$ & Fitted ellipse of luminaire's projection \\
            \hline
            $M$ & Mark point of the luminaire & ${G_e}$/${G_f}$ & Center of the circular luminaire\\
            \hline
            $M'$ & Projection of $M$ on image plane & ${G'_e}$/${G'_f}$ & Projection of ${G_e}$/${G_f}$ on image plane\\
            \hline
            $\Gamma$ & Elliptical cone whose vertex is the camera and base is the fitted ellipse & $\Gamma_c$ & Conical surface that is the lateral face of elliptical cone $\Gamma$\\
            \hline
            $\varphi$, $\theta$, $\psi$ & Euler angles around $x^{\rm c}$-axis, $y^{\rm c}$-axis, $z^{\rm c}$-axis & ${\boldsymbol{n}}_{{\rm{LED}}}^{\rm{w}}$/ ${\boldsymbol{n}}_{{\rm{LED}}}^{\rm{c}}$/ ${\boldsymbol{n}}_{{\rm{LED}}}^{\rm{a}}$ & Normal vector of the luminaire in WCS/CCS/ACS \\
            \hline
            $\boldsymbol{R}_{\text{c}}^{\text{w}}$/$\boldsymbol{t}_{\text{c}}^{\text{w}}$ & Pose/location of the camera in WCS (Rotation matrix/translation vector from CCS to WCS) & $\boldsymbol{R}_{\text{c}}^{\text{a}}$ & Pose of the camera in ACS (Rotation matrix from CCS to ACS)\\
            \hline
		\end{tabular}
\end{table*}

\section{System model and Problem Formulation}\label{systemmodel}
\subsection{System Model}
We consider an indoor VLP system that consists of a circular panel luminaires layout and one user equipped with a standard pinhole camera, as shown in Fig. \ref{scenario}(a). The circular panel luminaires are mounted on the ceiling of a room and are assumed to face vertically downwards. Each luminaire consists of several LEDs, each of which transmits the VLC information to the user. The user uses its camera to receive the VLC information and capture the image of the luminaires to estimate its current pose and location. 
\begin{figure*}[tpb]
  \centering
  \subfigure[Illustration of the system scenario.] {\includegraphics[height = 4cm, trim = 20 0 0 0]{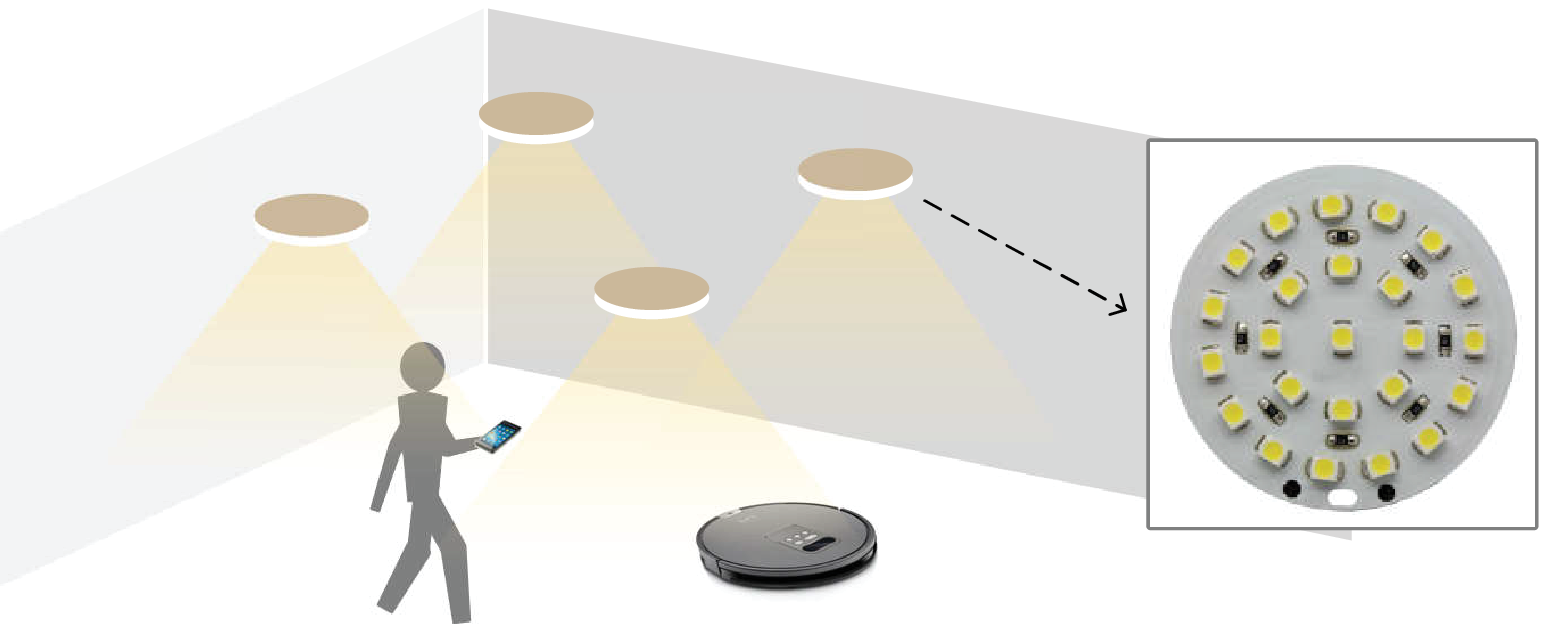}}
  \subfigure[The projection diagram of the system.] {\includegraphics[height=5.5cm,clip]{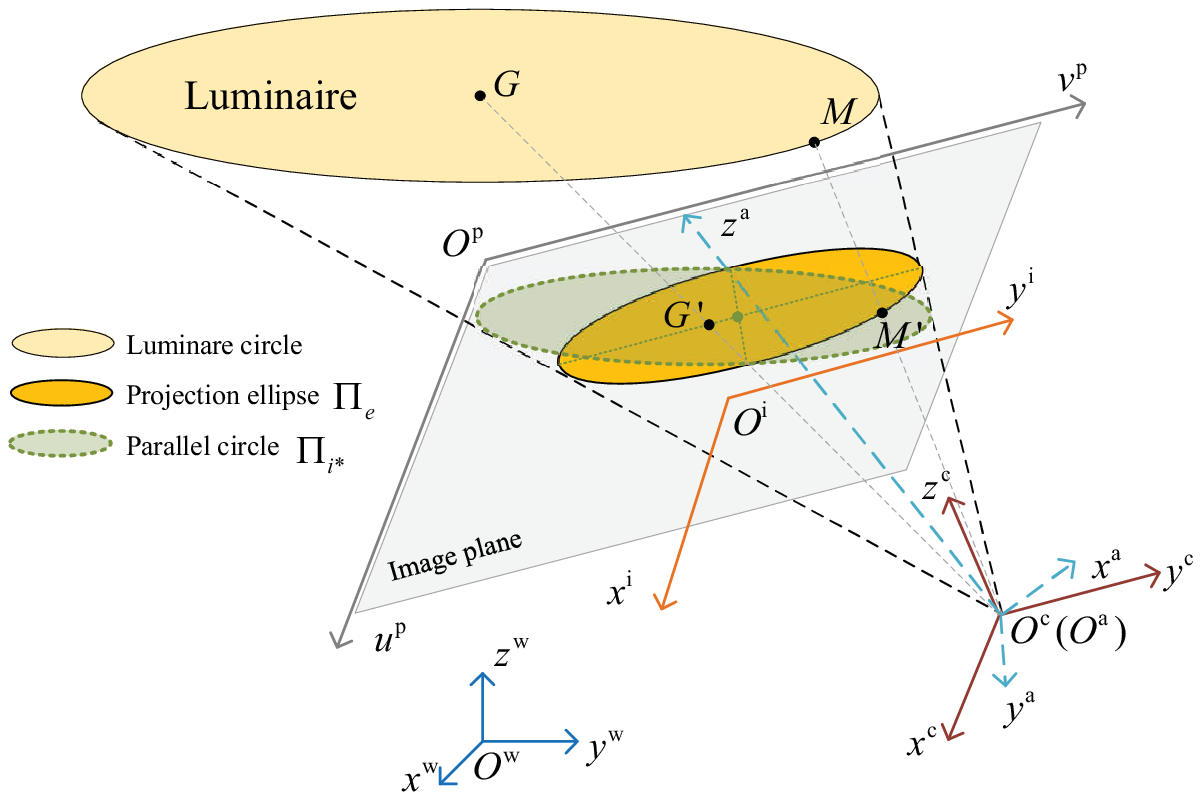}}
  \caption{System model diagrams.}\label{scenario}
\end{figure*}

In the considered system, the geometric relationship between the luminaire and the user will be used to estimate the pose and location of the user. In particular, each luminaire can be captured by the user's camera, and its projection is an ellipse on the image plane as shown in Fig. \ref{scenario}(b). 
The ellipse ${\Pi _e}$ on the image plane is the projection of a luminaire. The projection of the luminaire and the user can generate an elliptical cone $\Gamma$, whose base is ellipse ${\Pi _e}$, and vertex is ${O^{\rm{c}}}$. The lateral surface of the elliptical cone $\Gamma$ is a conical surface that is denoted by $\Gamma_c$.

To accurately estimate the user's location and pose, we need to use four coordinate systems: a) the two-dimensional (2D) pixel coordinate system (PCS) ${O^{\rm{p}}} - {u^{\rm{p}}}{v^{\rm{p}}}$ on the image plane, b) the 2D image coordinate system (ICS) ${O^{\rm{i}}} - {x^{\rm{i}}}{y^{\rm{i}}}$  on the image plane, c) the three-dimensional (3D) camera coordinate system (CCS) ${O^{\rm{c}}} - {x^{\rm{c}}}{y^{\rm{c}}}{z^{\rm{c}}}$, and d) the 3D world coordinate system (WCS) ${O^{\rm{w}}} - {x^{\rm{w}}}{y^{\rm{w}}}{z^{\rm{w}}}$.
In particular, ${O^{\rm{p}}}$, ${O^{\rm{i}}}$, ${O^{\rm{c}}}$ and ${O^{\rm{w}}}$ are the origins of PCS, ICS, CCS and WCS, respectively. In PCS, ICS, and CCS, the coordinate axes ${u^{\rm{p}}}$, ${x^{\rm{i}}}$, and ${x^{\rm{c}}}$ are parallel. Meanwhile, ${v^{\rm{p}}}$, ${y^{\rm{i}}}$, and ${y^{\rm{c}}}$ are parallel.
${O^{\rm{i}}}$ and ${O^{\rm{c}}}$ are on the optical axis, and the distance between them is the focal length $f$. Therefore, the $z$-coordinate of the image plane in CCS is ${z^{\rm{c}}} = f$. Here, we term the coordinate of a point in PCS/ICS/CCS/WCS as pixel coordinate/image coordinate/camera coordinate/world coordinate. Next, we summarize the relationships among the four coordinate systems in the following lemmas.

\begin{lemma}\label{PCStoICS}
\rm{On the image plane, given a pixel coordinate, ${\boldsymbol{P}}_{i'}^{\rm{p}} = (u_{i'}^{\rm{p}},v_{i'}^{\rm{p}})^{\rm{T}}$, of point $i'$ in PCS,
the image coordinate, ${\boldsymbol{P}}_{i'}^{\rm{i}}=(x_{i'}^{\rm{i}},y_{i'}^{\rm{i}})^{\rm{T}}$, of point $i'$ in ICS can be obtained by

\begin{equation}\label{PCStoICS}
{\boldsymbol{P}}_{i'}^{\text{i}}=\left[ \begin{matrix}
   x_{i'}^{\text{i}} \\
  y_{i'}^{\text{i}} \\
\end{matrix} \right]=\left[ \begin{matrix}
   {{d}_{x}} & 0  \\
   0 & {{d}_{y}}  \\
\end{matrix} \right]\cdot {\boldsymbol{P}}_{i'}^{\text{p}}-\left[ \begin{matrix}
   {{u}_{0}}{{d}_{x}}  \\
   {{v}_{0}}{{d}_{y}}  \\
\end{matrix} \right],
\end{equation}
where ${d_x}$ and ${d_y}$ are the physical size of each pixel along $x^{\rm{i}}$ and $y^{\rm{i}}$ axes on the image plane, respectively, and $\left(u_o,v_o\right)^{\rm{T}}$ is the pixel coordinate of the origin, $O^{\rm{i}}$}, of ICS.
\end{lemma}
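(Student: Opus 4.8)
The plan is to obtain the transformation as the composition of two elementary maps between the PCS and the ICS, which by assumption share parallel axes ($u^{\rm p} \parallel x^{\rm i}$ and $v^{\rm p} \parallel y^{\rm i}$). Since no rotation relates the two frames, the map factors as a translation of the origin followed by a scaling from pixel units to physical length, with possibly distinct factors $d_x$ and $d_y$ along the two axes; the two coordinate directions can therefore be handled independently.

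First I would locate the origin $O^{\rm i}$ of the ICS within the PCS: by definition it sits at pixel coordinate $(u_0, v_0)^{\rm T}$. Subtracting this from the pixel coordinate of point $i'$ gives its displacement from $O^{\rm i}$, measured in pixels, namely $(u_{i'}^{\rm p} - u_0, v_{i'}^{\rm p} - v_0)^{\rm T}$. Next, because a single pixel spans the physical length $d_x$ along $x^{\rm i}$ and $d_y$ along $y^{\rm i}$, I would rescale each displacement component by the corresponding physical size to convert it into a physical length, yielding the image coordinates
\begin{equation}
x_{i'}^{\rm i} = d_x \left( u_{i'}^{\rm p} - u_0 \right), \qquad y_{i'}^{\rm i} = d_y \left( v_{i'}^{\rm p} - v_0 \right).
\end{equation}
Expanding the products and collecting terms then recovers the claimed matrix form, with the diagonal matrix $\mathrm{diag}(d_x, d_y)$ acting on $\boldsymbol{P}_{i'}^{\rm p}$ and the constant vector $(u_0 d_x, v_0 d_y)^{\rm T}$ subtracted.

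There is no substantive obstacle in this derivation; it is essentially a bookkeeping argument from the definitions of the two coordinate systems. The only point requiring care is the ordering of the translate-then-scale steps together with the resulting signs, which must be consistent with the convention that $O^{\rm i}$ has pixel coordinate $(u_0, v_0)$: scaling the \emph{already-translated} displacement is what places the factors $d_x$ and $d_y$ on both the linear term and the offset, exactly as the target equation requires.
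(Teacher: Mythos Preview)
Your proposal is correct and follows essentially the same approach as the paper: both derive the componentwise relations $x_{i'}^{\rm i} = d_x(u_{i'}^{\rm p} - u_0)$ and $y_{i'}^{\rm i} = d_y(v_{i'}^{\rm p} - v_0)$ by translating the pixel coordinates to the ICS origin and then scaling by the physical pixel sizes, from which the matrix form follows immediately. Your write-up is in fact more explicit about the translate-then-scale decomposition than the paper's brief appeal to single-view geometry.
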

\begin{proof}
Given the pixel coordinate of the origin, $\left(u_o,v_o\right)^{\rm{T}}$, the physical size, ${d_x}$ and ${d_y}$\footnote{The origin $\left(u_o,v_o\right)^{\rm{T}}$ and the physical sizes ${d_x}$ and ${d_y}$, are camera’s intrinsic parameters, and can be calibrated in advance \cite{CARSSR2019bai}. }, and the pixel coordinate, ${\boldsymbol{P}}_{i'}^{\rm{p}} = (u_{i'}^{\rm{p}},v_{i'}^{\rm{p}})^{\rm{T}}$, of point $i'$ on the image plane, $x_{i'}^{\rm{i}} = {d_x}\left(u_{i'}^{\rm{p}} - u_o\right)$ and $y_{i'}^{\rm{i}} = {d_y}\left(v_{i'}^{\rm{p}} - v_o\right)$ can be obtained according to single-view geometry theory \cite{zisserman1997geometric}. This completes the proof.

\end{proof}

\begin{lemma}
\rm{Given a camera coordinate, ${\boldsymbol{P}}_i^{\rm{c}} = (x_i^{\rm{c}},y_i^{\rm{c}},z_i^{\rm{c}})^{\rm{T}}$, of point $i$ in CCS, the image coordinate, ${\boldsymbol{P}}_{i'}^{\rm{i}} = (x_{i'}^{\rm{i}},y_{i'}^{\rm{i}})^{\rm{T}}$, of its projection point $i'$ on the image plane can be obtained by
\begin{equation}\label{ICStoCCS}
   {\boldsymbol{P}}_{i}^{\text{c}} =\left[ \begin{matrix}
   x_{i}^{\text{c}} \\
  y_{i}^{\text{c}} \\
  z_{i}^{\text{c}} \\
\end{matrix} \right]=z_{i}^{\text{c}}\left[ \begin{matrix}
   1/f & 0 & 0  \\
   0 & 1/f & 0  \\
   0 & 0 & 1  \\
\end{matrix} \right]\left[ \begin{matrix}
   x_{i'}^{\text{i}} \\
  y_{i'}^{\text{i}}   \\
   1  \\
\end{matrix} \right]=z_{i}^{\text{c}}\left[ \begin{matrix}
   1/f & 0 & 0  \\
   0 & 1/f & 0  \\
   0 & 0 & 1  \\
\end{matrix} \right]\left[ \begin{matrix}
   {\boldsymbol{P}}_{i'}^{\rm{i}}  \\
   1  \\
\end{matrix} \right],
\end{equation}
Particularly, the camera coordinate of point $i'$ in CCS is ${\boldsymbol{P}}_{i'}^{\rm{c}}= (x_{i'}^{\rm{i}},y_{i'}^{\rm{i}},f)^{\rm{T}}$.}
\end{lemma}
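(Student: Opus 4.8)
The plan is to invoke the standard pinhole projection geometry: the optical center $O^{\rm c}$, the space point $i$, and its image $i'$ are collinear, and $i'$ is precisely the intersection of the ray $O^{\rm c}i$ with the image plane $z^{\rm c}=f$. First I would pin down the camera coordinate of the projection point $i'$, then relate it to $\boldsymbol{P}_i^{\rm c}$ by similar triangles, and finally rewrite the resulting scalar relations in the matrix form of \eqref{ICStoCCS}.

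The first step is to establish the auxiliary claim $\boldsymbol{P}_{i'}^{\rm c}=(x_{i'}^{\rm i},y_{i'}^{\rm i},f)^{\rm T}$. Because the ICS axes $x^{\rm i},y^{\rm i}$ are parallel to the CCS axes $x^{\rm c},y^{\rm c}$ and the ICS origin $O^{\rm i}$ lies on the optical axis at distance $f$ from $O^{\rm c}$, a point lying in the image plane with image coordinate $(x_{i'}^{\rm i},y_{i'}^{\rm i})^{\rm T}$ keeps the same first two coordinates in the CCS and acquires third coordinate $f$. This is exactly the final assertion of the lemma, and it is the ingredient on which the rest of the argument rests.

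Next I would use collinearity. Since $O^{\rm c}$, $i'$, and $i$ are collinear with $O^{\rm c}$ at the origin of the CCS, the vectors $\boldsymbol{P}_{i'}^{\rm c}$ and $\boldsymbol{P}_i^{\rm c}$ are parallel; equivalently, the right triangles obtained by dropping perpendiculars from $i'$ and from $i$ onto the optical axis are similar. Comparing third components fixes the scale factor $z_i^{\rm c}/f$, whence $x_i^{\rm c}=(z_i^{\rm c}/f)\,x_{i'}^{\rm i}$ and $y_i^{\rm c}=(z_i^{\rm c}/f)\,y_{i'}^{\rm i}$, while $z_i^{\rm c}$ is left unchanged. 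Collecting these three scalar identities into one matrix equation produces precisely \eqref{ICStoCCS}.

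The argument is elementary, so there is no genuinely hard step; the only point demanding care is the identification of the camera coordinate of $i'$, since it silently invokes all three geometric conventions fixed in the system model, namely the parallelism of the $x$ and $y$ axes across ICS and CCS, the location of $O^{\rm i}$ on the optical axis, and the placement of the image plane at $z^{\rm c}=f$. Once that identification is secured, the similar-triangles computation and the matrix rewriting are purely routine.
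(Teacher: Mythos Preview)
Your proposal is correct and follows essentially the same approach as the paper: both arguments use the similar-triangles relation from the pinhole model to obtain $x_i^{\rm c}/x_{i'}^{\rm i}=y_i^{\rm c}/y_{i'}^{\rm i}=z_i^{\rm c}/f$, and both identify $\boldsymbol{P}_{i'}^{\rm c}=(x_{i'}^{\rm i},y_{i'}^{\rm i},f)^{\rm T}$ from the fact that the image plane sits at $z^{\rm c}=f$ with its axes aligned to those of the CCS. The only cosmetic difference is ordering: you establish the camera coordinate of $i'$ first and then invoke collinearity, whereas the paper writes the similarity ratios first and records $\boldsymbol{P}_{i'}^{\rm c}$ afterward.
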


\begin{proof}
According to the triangle similarity in projection diagram, we have $\frac {{x_{i}^{\text{c}}}}{{x_{i'}^{\text{i}}}} = \frac{z_{i}^{\text{c}}}{f}$ and $\frac {{y_{i}^{\text{c}}}}{{y_{i'}^{\text{i}}}} = \frac{z_{i}^{\text{c}}}{f}$. Hence, we can obtain the coordinates of ${\boldsymbol{P}}_i^{\rm{c}}$ using ${\boldsymbol{P}}_{i'}^{\rm{i}}$. This completes the proof of (\ref{ICStoCCS}). The image plane is perpendicular to the optical axis, and thus, any point on the image plane has the $z$-coordinate of $z^{\text{c}}=f$. Then, we can have the camera coordinate of point $i'$ as ${\boldsymbol{P}}_{i'}^{\rm{c}}= (x_{i'}^{\rm{i}},y_{i'}^{\rm{i}},f)^{\rm{T}}$.
\end{proof}

\begin{lemma}
\rm{Given a camera coordinate, ${\boldsymbol{P}}_i^{\rm{c}} = (x_i^{\rm{c}},y_i^{\rm{c}},z_i^{\rm{c}})^{\rm{T}}$, of point $i$, a rotation matrix $\boldsymbol{R}_{\text{c}}^{\text{w}}$, and a translation vector $\boldsymbol{t}_{\text{c}}^{\text{w}}$, the world coordinate, ${\boldsymbol{P}}_i^{\rm{w}} = (x_i^{\rm{w}},y_i^{\rm{w}},z_i^{\rm{w}})^{\rm{T}}$ of point $i$ can be obtained by \cite{bai2021computer}
\begin{equation}\label{RTexpress}
{{\boldsymbol{P}}_{i}^{\rm{w}}} = {\boldsymbol{R}}_{\rm{c}}^{\rm{w}}  \cdot {{\boldsymbol{P}}_{i}^{\rm{c}}} + {\boldsymbol{t}}_{\rm{c}}^{\rm{w}},
\end{equation}
where $\boldsymbol{R}_{\text{c}}^{\text{w}}$ is a $3 \times 3$ matrix, which also represents the pose of the user. $\boldsymbol{t}_{\text{c}}^{\text{w}}$ is a $3 \times 1$ vector, which also represents the location of the user.}
\end{lemma}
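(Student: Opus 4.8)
The plan is to derive the identity (\ref{RTexpress}) directly from the definition of a rigid-body change of coordinates between two right-handed orthonormal frames, namely CCS and WCS. First I would fix the physical point $i$ and describe its location by the position vector drawn from the world origin $O^{\rm{w}}$. Decomposing this vector through the camera origin $O^{\rm{c}}$ gives $\overrightarrow{O^{\rm{w}} i} = \overrightarrow{O^{\rm{w}} O^{\rm{c}}} + \overrightarrow{O^{\rm{c}} i}$, which already separates the contribution of the shift of origin from the contribution of the point measured relative to the camera.

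Next I would handle the two terms separately. The first term, $\overrightarrow{O^{\rm{w}} O^{\rm{c}}}$, is by definition the world coordinate of the camera origin, which I would identify with the translation vector $\boldsymbol{t}_{\rm{c}}^{\rm{w}}$. For the second term, $\overrightarrow{O^{\rm{c}} i}$ has components $(x_i^{\rm{c}},y_i^{\rm{c}},z_i^{\rm{c}})^{\rm{T}} = \boldsymbol{P}_i^{\rm{c}}$ when expanded in the CCS basis. To express the same vector in the WCS basis I would write each CCS basis vector as a linear combination of the WCS basis vectors; collecting these coefficients column by column produces a $3 \times 3$ matrix, which I would take as $\boldsymbol{R}_{\rm{c}}^{\rm{w}}$. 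Because both frames are orthonormal and right-handed, this matrix is orthogonal with unit determinant, i.e.\ a genuine rotation, so it indeed encodes the relative orientation (pose) of the camera. Substituting both terms into the decomposition yields ${\boldsymbol{P}}_i^{\rm{w}} = \boldsymbol{R}_{\rm{c}}^{\rm{w}} \boldsymbol{P}_i^{\rm{c}} + \boldsymbol{t}_{\rm{c}}^{\rm{w}}$, as claimed.

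The argument is essentially bookkeeping, so I do not anticipate a genuine mathematical obstacle; the result is the standard Euclidean-transformation formula and can also be cited directly from \cite{bai2021computer}. The only point requiring care is the direction convention: I must verify that the columns of $\boldsymbol{R}_{\rm{c}}^{\rm{w}}$ are the CCS axes expressed in WCS (the ``c to w'' direction) rather than the inverse, and correspondingly that $\boldsymbol{t}_{\rm{c}}^{\rm{w}}$ is the camera origin measured in WCS and not the world origin measured in CCS. Keeping these conventions consistent with the preceding lemmas is exactly what fixes the correct placement of $\boldsymbol{R}_{\rm{c}}^{\rm{w}}$ and $\boldsymbol{t}_{\rm{c}}^{\rm{w}}$ in the final formula and ensures the forward map from camera to world coordinates is the one stated.
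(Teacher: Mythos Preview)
Your derivation is correct, but it differs in scope from what the paper actually does. The paper treats the transformation formula $\boldsymbol{P}_i^{\rm{w}} = \boldsymbol{R}_{\rm{c}}^{\rm{w}} \boldsymbol{P}_i^{\rm{c}} + \boldsymbol{t}_{\rm{c}}^{\rm{w}}$ as given (it is cited from \cite{bai2021computer} in the lemma statement itself) and its ``proof'' only justifies the two interpretive claims appended to the lemma: that $\boldsymbol{R}_{\rm{c}}^{\rm{w}}$, being composed of rotations about the CCS axes, represents the user's pose, and that $\boldsymbol{t}_{\rm{c}}^{\rm{w}}$ represents the user's location because setting $\boldsymbol{P}_i^{\rm{c}} = (0,0,0)^{\rm{T}}$ in the formula yields $\boldsymbol{P}_i^{\rm{w}} = \boldsymbol{t}_{\rm{c}}^{\rm{w}}$. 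You, by contrast, rederive the cited formula from first principles via the vector decomposition $\overrightarrow{O^{\rm{w}} i} = \overrightarrow{O^{\rm{w}} O^{\rm{c}}} + \overrightarrow{O^{\rm{c}} i}$ and a change of basis, and the interpretations of $\boldsymbol{R}_{\rm{c}}^{\rm{w}}$ and $\boldsymbol{t}_{\rm{c}}^{\rm{w}}$ then fall out as by-products of that derivation. Your route is more self-contained and explains \emph{why} the formula holds; the paper's route is terser and focuses only on the semantic content of $\boldsymbol{R}_{\rm{c}}^{\rm{w}}$ and $\boldsymbol{t}_{\rm{c}}^{\rm{w}}$, deferring the formula itself to the literature.
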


\begin{proof}
$\boldsymbol{R}_{\text{c}}^{\text{w}}$ contains the rotations corresponding to $x^{\rm{c}}$, $y^{\rm{c}}$, and $z^{\rm{c}}$ axes in CCS. Thus,  $\boldsymbol{R}_{\text{c}}^{\text{w}}$ can be considered as the pose of the user. Since ${{\boldsymbol{P}}_{i}^{\rm{w}}} = {\boldsymbol{t}}_{\rm{c}}^{\rm{w}}$ when ${{\boldsymbol{P}}_{ i}^{\rm{c}}}={\left( {0,0,0} \right)^{\rm{T}}}$ according to (\ref{RTexpress}), $\boldsymbol{t}_{\rm{c}}^{\rm{w}}$ can be considered as the location of the user.
\end{proof}

After establishing the relationships among the four coordinate systems, we can use the geometric information on the image plane to deduce the relationship between the user and the spacial luminaire, so as to estimate the user's location and pose. This design problem will be formulated next.

\subsection{Problem Formulation}
Given the introduced system model, we now introduce our pose and location estimation problem. Our purpose is to obtain accurate $\boldsymbol R_{\rm{c}}^{\rm{w}}$ and $\boldsymbol t_{\rm{c}}^{\rm{w}}$ in (\ref{RTexpress}). Let ${{\boldsymbol{R}}_X}$, ${{\boldsymbol{R}}_Y}$, and ${{\boldsymbol{R}}_Z}$ be the rotation matrices around axis $x^{\rm c}$, $y^{\rm c}$, and $z^{\rm c}$, respectively. Then, the pose of the user can be expressed as
\begin{equation}\label{Rcw}
{\boldsymbol{R}}_{\rm{c}}^{\rm{w}} = {{\boldsymbol{R}}_X}{{\boldsymbol{R}}_Y}{{\boldsymbol{R}}_Z},
\end{equation}
where ${{\boldsymbol{R}}_X}$, ${{\boldsymbol{R}}_Y}$, and ${{\boldsymbol{R}}_Z}$ can be denoted by \cite{bai2021computer}
\begin{equation}\label{Rx}
{{\boldsymbol{R}}_X} = \left[ {\begin{array}{*{20}{c}}
1&0&0\\
0&{\cos \varphi }&{ - \sin \varphi }\\
0&{\sin \varphi }&{\cos \varphi }
\end{array}} \right],
\end{equation}
\begin{equation}\label{Ry}
{{\boldsymbol{R}}_Y} = \left[ {\begin{array}{*{20}{c}}
{\cos \theta }&0&{\sin \theta }\\
0&1&0\\
{ - \sin \theta }&0&{\cos \theta }
\end{array}} \right],
\end{equation}
and
\begin{equation}\label{Rz}
\vspace{-0.1cm}
{{\boldsymbol{R}}_Z} = \left[ {\begin{array}{*{20}{c}}
{\cos \psi }&{ - \sin \psi }&0\\
{\sin \psi }&{\cos \psi }&0\\
0&0&1
\end{array}} \right],
\end{equation}
with $\varphi$, $\theta$, and $\psi$ being the Euler angles around $x^{\rm c}$-axis, $y^{\rm c}$-axis, and $z^{\rm c}$-axis, respectively. Hence, if we want to estimate the pose $\boldsymbol R_{\rm{c}}^{\rm{w}}$ of the user, we need to estimate the Euler angles $\varphi$, $\theta$, and $\psi$. Given the estimated $\boldsymbol R_{\rm{c}}^{\rm{w}}$, the location $\boldsymbol t_{\rm{c}}^{\rm{w}}$ of the user can also be estimated according to (\ref{RTexpress}).

We consider two scenarios. In the \emph{first scenario}, we analyze a case in which a complete circular luminaire and an incomplete one can be captured by the camera. As introduced in Section \ref{introduction}, when using circular luminaires for positioning in existing studies, an IMU{\footnote{IMU is an electronic device that can measure the orientation of the user, and it usually has measurement errors \cite{IMU2019}.}} must be used to estimate the Euler angles, the accuracy of which is significantly affected by the estimation deviation of the IMU. Moreover, when circle features are employed in pose estimation, there lacks point correspondences, and there are usually dual solutions. To circumvent these challenges, the V-PCA algorithm is proposed for accurate positioning. We propose two methods including space-time coding and artificially marking to obtain point correspondences. We also propose to use the geometric features extracted from the ellipse image of a circular luminaire to estimate the Euler angles, and use another captured incomplete luminaire for disambiguating.
Then, considering that parts of VLC links may be blocked, we further consider a \emph{second scenario} in which two incomplete luminaires are captured. However, the point correspondences in V-PCA are no longer appropriate for this scenario. Hence, we propose the OA-V-PA algorithm with an approximation method developed to solve point correspondences. In the following sections, we detail the proposed V-PCA and OA-V-PA algorithms.

\section{V-PCA Positioning Algorithm}\label{V-PCA}
In this section, we consider a scenario in which a complete luminaire and an incomplete luminaire are simultaneously captured by the camera, and then, a novel positioning algorithm, called V-PCA, is introduced for the user's pose and location estimation.
Compared to existing VLP algorithms that require IMU for Euler angles estimation, the proposed algorithm can achieve higher positioning accuracy with only image sensor, thereby avoiding measurement errors. The proposed V-PCA algorithm consists of four steps:
a) Estimate the normal vector ${\boldsymbol{n}}_{{\rm{LED}}}^{\rm{c}}$ of the luminaire in CCS.
b) Estimate the coordinates of the center $G_e$ and the mark point $M$ in CCS, i.e., ${\boldsymbol {P}}_{G_e}^{\rm{c}}$ and ${\boldsymbol {P}}_{M}^{\rm{c}}$.
c) Estimate the pose and the location of the user in WCS based on $ {\boldsymbol{n}}_{{\rm{LED}}}^{\rm{c}}$, ${\boldsymbol {P}}_{G_e}^{\rm{c}}$ and ${\boldsymbol {P}}_{M}^{\rm{c}}$ by using single-view geometry theory.
Next, we introduce the various steps of our proposed V-PCA algorithm.

%

\subsection{Normal Vector of Luminaires in CCS}\label{normalvector}
To estimate the pose and location of the user, the first step is to find the normal vector ${\boldsymbol{n}}_{{\rm{LED}}}^{\rm{c}}$ of the luminaire in CCS, and, thus, determine the orientation of the user. To obtain ${\boldsymbol{n}}_{{\rm{LED}}}^{\rm{c}}$, we establish an auxiliary coordinate system (ACS) and exploit the geometric relation between the luminaire and its projection on the image plane.

On the image plane, the elliptic curve ${\Pi _e}$ of a luminaire can be expressed as
\begin{equation}\label{imgellipse}
\vspace{-0.1cm}
 A{(x^{\rm{i}}})^2 + B{x^{\rm{i}}}{y^{\rm{i}}} + C({y^{\rm{i}}})^2 + D{x^{\rm{i}}} + E{y^{\rm{i}}} + {\rm{1}} = 0,
\vspace{-0.1cm}
\end{equation}
where $A$, $B$, $C$, $D$, and $E$ are parameters that can be determined by curving fitting with the least squares fitting \cite{ellipsefitting}.

\begin{lemma}
\rm{Given ICS, CCS and an expression of elliptic curve, ${\Pi _e}$, i.e., $ A{(x^{\rm{i}}})^2 + B{x^{\rm{i}}}{y^{\rm{i}}} + C({y^{\rm{i}}})^2 + D{x^{\rm{i}}} + E{y^{\rm{i}}} + {\rm{1}} = 0$, there must exist a canonical frame of conicoid, i.e., ACS, which enables the conical surface $\Gamma_c$ of the elliptical cone $\Gamma$ can be represented in a compact form as
\begin{equation}\label{TCScone2}
{\lambda _1}({x^{{\rm{a}}}})^2 + {\lambda _2}({y^{{\rm{a}}}})^2 + {\lambda _3}({z^{{\rm{a}}}})^2 = 0,
\end{equation}
in ACS. ${\lambda _1},{\lambda _2}$, and ${\lambda _3}$ are eigenvalues of ${\boldsymbol{Q}} = \left[ {\begin{array}{*{20}{c}}
{A{f^2}}  &  {B{f^2}/2}  &  {Df/2}\\
{B{f^2}/2}  &  {C{f^2}}  &  {Ef/2}\\
{Df/2}  &  {Ef/2}  &  1
\end{array}} \right]$, and the rotation matrix from ACS to CCS is denoted by ${{\boldsymbol{R}}_{{\rm{a}}}^{\rm{c}}}$ , which consists of the eigenvectors of $\boldsymbol{Q}$.}
\end{lemma}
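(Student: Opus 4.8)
The plan is to first derive the equation of the conical surface $\Gamma_c$ directly in CCS, and then reduce it to canonical form by diagonalizing a symmetric matrix. A point $\boldsymbol{P}^{\rm c}=(x^{\rm c},y^{\rm c},z^{\rm c})^{\rm T}$ lies on $\Gamma_c$ if and only if the ray from the vertex $O^{\rm c}=(0,0,0)^{\rm T}$ through $\boldsymbol{P}^{\rm c}$ pierces the image plane $z^{\rm c}=f$ at a point of the base ellipse $\Pi_e$. Scaling the ray to reach $z^{\rm c}=f$ and invoking the projection relation (\ref{ICStoCCS}), that piercing point has image coordinates $x^{\rm i}=fx^{\rm c}/z^{\rm c}$ and $y^{\rm i}=fy^{\rm c}/z^{\rm c}$. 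Substituting these into the elliptic curve equation (\ref{imgellipse}) and clearing the denominator by multiplying through by $(z^{\rm c})^2$ yields a homogeneous quadratic in $(x^{\rm c},y^{\rm c},z^{\rm c})$, which is exactly the defining equation of the cone, since homogeneity reflects that $\Gamma_c$ is a union of lines through the vertex.

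The next step is to recognize this homogeneous quadratic as $(\boldsymbol{P}^{\rm c})^{\rm T}\boldsymbol{Q}\,\boldsymbol{P}^{\rm c}=0$. Reading off coefficients, the $(x^{\rm c})^2$, $x^{\rm c}y^{\rm c}$, $(y^{\rm c})^2$, $x^{\rm c}z^{\rm c}$, $y^{\rm c}z^{\rm c}$, and $(z^{\rm c})^2$ terms carry the factors $Af^2$, $Bf^2$, $Cf^2$, $Df$, $Ef$, and $1$, respectively; distributing the three cross terms symmetrically into off-diagonal entries reproduces precisely the stated matrix $\boldsymbol{Q}$. Because $\boldsymbol{Q}$ is real and symmetric, the spectral theorem guarantees an orthonormal eigenbasis: there exists an orthogonal matrix $\boldsymbol{R}_{\rm a}^{\rm c}$ whose columns are unit eigenvectors of $\boldsymbol{Q}$ with $(\boldsymbol{R}_{\rm a}^{\rm c})^{\rm T}\boldsymbol{Q}\,\boldsymbol{R}_{\rm a}^{\rm c}=\mathrm{diag}(\lambda_1,\lambda_2,\lambda_3)$, where $\lambda_1,\lambda_2,\lambda_3$ are the eigenvalues of $\boldsymbol{Q}$.

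I would then \emph{define} the ACS to be the frame related to CCS by $\boldsymbol{P}^{\rm c}=\boldsymbol{R}_{\rm a}^{\rm c}\boldsymbol{P}^{\rm a}$, so that $\boldsymbol{R}_{\rm a}^{\rm c}$ is by construction the rotation from ACS to CCS. Substituting this change of coordinates into $(\boldsymbol{P}^{\rm c})^{\rm T}\boldsymbol{Q}\,\boldsymbol{P}^{\rm c}=0$ and applying the diagonalization gives $(\boldsymbol{P}^{\rm a})^{\rm T}\mathrm{diag}(\lambda_1,\lambda_2,\lambda_3)\boldsymbol{P}^{\rm a}=0$, i.e. the compact form $\lambda_1(x^{\rm a})^2+\lambda_2(y^{\rm a})^2+\lambda_3(z^{\rm a})^2=0$ asserted in (\ref{TCScone2}). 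This exhibits ACS as a canonical frame of the conicoid and establishes the claim.

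Since the linear algebra is routine, the steps needing the most care are conceptual. The first is setting up the central-projection substitution correctly, keeping straight which variables denote the running point on the cone versus its image, and justifying that multiplying by $(z^{\rm c})^2$ introduces no spurious branch: restricting the quadric to $z^{\rm c}=0$ leaves $A(x^{\rm c})^2+Bx^{\rm c}y^{\rm c}+C(y^{\rm c})^2=0$, which for an ellipse ($B^2-4AC<0$) is definite and forces $x^{\rm c}=y^{\rm c}=0$, so the only such point is the vertex, as it should be. The second point of care is that the spectral theorem only supplies an \emph{orthogonal} $\boldsymbol{R}_{\rm a}^{\rm c}$; to make it a genuine rotation one flips the sign of a single eigenvector when $\det\boldsymbol{R}_{\rm a}^{\rm c}=-1$, which leaves the diagonalization untouched while enforcing $\det\boldsymbol{R}_{\rm a}^{\rm c}=+1$.
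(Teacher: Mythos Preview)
Your proposal is correct and follows essentially the same route as the paper: derive the cone equation in CCS by substituting the projection relation $x^{\rm i}=fx^{\rm c}/z^{\rm c}$, $y^{\rm i}=fy^{\rm c}/z^{\rm c}$ into the ellipse equation, write the resulting homogeneous quadratic as $(\boldsymbol{P}^{\rm c})^{\rm T}\boldsymbol{Q}\,\boldsymbol{P}^{\rm c}=0$, and then diagonalize the real symmetric $\boldsymbol{Q}$ via an orthogonal eigenvector matrix $\boldsymbol{R}_{\rm a}^{\rm c}$ to define ACS and obtain the canonical form. Your additional remarks on the $(z^{\rm c})^2$-clearing step and on enforcing $\det\boldsymbol{R}_{\rm a}^{\rm c}=+1$ are nice points of rigor that the paper omits, but the underlying argument is the same.
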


\begin{proof}
By substituting (\ref{ICStoCCS}) into (\ref{imgellipse}), we can have a general form of the conical surface $\Gamma_c$ in CCS as
\begin{equation}\label{CCScone1}
 A{f^2}(\!{x^c}\!)^2 \! +\! B{f^2}{x^c}{y^c}\! +\! C{f^2}(\!{y^c}\!)^2 \!+\! Df{x^c}{z^c} \!+ \! Ef{y^c}{z^c} \!+ \!(\!{z^{c}}\!)^2 =\left[ \begin{matrix}
   {{x}^{\text{c}}} & {{y}^{\text{c}}} & {{z}^{\text{c}}}  \\
\end{matrix} \right]\mathbf{Q}{{\left[ \begin{matrix}
   {{x}^{\text{c}}} & {{y}^{\text{c}}} & {{z}^{\text{c}}}  \\
\end{matrix} \right]}^{\text{T}}}=0 \\
\end{equation}

where ${\boldsymbol{Q}} = \left[ {\begin{array}{*{20}{c}}
{A{f^2}}  &  {B{f^2}/2}  &  {Df/2}\\
{B{f^2}/2}  &  {C{f^2}}  &  {Ef/2}\\
{Df/2}  &  {Ef/2}  &  1
\end{array}} \right]$. Matrix $\boldsymbol{Q}$ is not a diagonal matrix since the central axis of elliptical cone $\Gamma$ does not coincide with any coordinate axes of CCS. To simplify (\ref{CCScone1}), we first assume a canonical frame of conicoid, ACS, and a rotation matrix, ${{\boldsymbol{R}}_{{\rm{a}}}^{\rm{c}}}$, from ACS to CCS. Then, the transformation between ACS and CCS can be expressed as

\begin{equation}\label{TCStoCCS}
\vspace{-0.2cm}
{\left[ {\begin{array}{*{20}{c}}
{x^{\rm{c}}}&{y^{\rm{c}}}&{z^{\rm{c}}}
\end{array}} \right]^{\rm{T}}} = {\boldsymbol{R}}_{{\rm{a}}}^{\rm{c}}{\left[ {\begin{array}{*{20}{c}}
{x^{{\rm{a}}}}&{y^{{\rm{a}}}}&{z^{{\rm{a}}}}
\end{array}} \right]^{\rm{T}}},
\end{equation}
By substituting (\ref{TCStoCCS}) into (\ref{CCScone1}), we have
\begin{equation}\label{TCScone1}
\left[ {\begin{array}{*{20}{c}}
{{x^{{\rm{a}}}}} \!  & \!  {{y^{{\rm{a}}}}} \!\!  &  \!\!{{z^{{\rm{a}}}}}
\end{array}} \right]{\left( {{\boldsymbol{R}}_{{\rm{a}}}^{\rm{c}}} \right)^{\rm{T}}}{\boldsymbol{QR}}_{{\rm{a}}}^{\rm{c}}{\left[ {\begin{array}{*{10}{c}}
{{x^{{\rm{a}}}}}  \!\!  &  \!\!  {{y^{{\rm{a}}}}}  \!\!  &  \!\!{{z^{{\rm{a}}}}}
\end{array}} \right]^{\rm{T}}} = 0.
\end{equation}
Next, by diagonalizing matrix $\bf{Q}$, (\ref{TCScone1}) can be derived as
\begin{equation}\label{diag}
  ({{\boldsymbol{R}}_{{\rm{a}}}^{\rm{c}}})^{\rm{T}}{\boldsymbol{Q{{\boldsymbol{R}}_{{\rm{a}}}^{\rm{c}}}}} = ({{{\boldsymbol{R}} _{{\rm{a}}}^{\rm{c}}})^{-1}}{\boldsymbol{Q{{\boldsymbol{R}}_{{\rm{a}}}^{\rm{c}}}}} = {\rm{diag}}\left({{\lambda _1}},{{\lambda _2}},{{\lambda _3}} \right),
\end{equation}
where ${{\boldsymbol{R}}_{{\rm{a}}}^{\rm{c}}}$ consists of the eigenvectors of $\boldsymbol{Q}$. In addition, ${\lambda _1},{\lambda _2}$, and ${\lambda _3}$ are eigenvalues of $\boldsymbol{Q}$. Hence, in ACS, a compact form equation of conical surface $\Gamma_c$ in (\ref{TCScone2}) can be derived from (\ref{TCScone1}).

Note that matrix $\boldsymbol{Q}$ is a real symmetric matrix, and thus it can be diagonalized definitely. After diagonalizing matrix $\boldsymbol{Q}$, ${{\boldsymbol{R}}_{{\rm{a}}}^{\rm{c}}}$ and $\lambda_i, i=1,2,3$ can be obtained. Meanwhile, the corresponding ACS can also be determined.
\end{proof}

ACS aligns one of the coordinate axes with the central axis of elliptical cone $\Gamma$, and it also shares the same origin with CCS. The values of ${\lambda _i}, i=1,2,3$ determine the relative position of the elliptical cone $\Gamma$ and the coordinate axes of ACS.
To apply the form of the conical surface $\Gamma_c$ in (\ref{TCScone2}), there must be one of ${\lambda _i}, i=1,2,3$ less than 0, while the other two greater than 0. In addition, there are six possible cases among ${\lambda _1},{\lambda _2}$, and ${\lambda _3}$. Each case represents a relation between the elliptical cone and the coordinate axes. For instance, ${\lambda _3} < 0 < {\lambda _2} \le {\lambda _1}$ means that the central axis of elliptical cone $\Gamma$ aligns with ${z^{\rm{a}}}$ axis, the major-axis of the ellipse is parallel to ${y^{\rm{a}}}$ axis, and the minor-axis is parallel to ${x^{\rm{a}}}$ axis. When ${\lambda _1},{\lambda _2}$, and ${\lambda _3}$ have other relations, we can always find the corresponding relations between the elliptical cone and the coordinate axes.

We estimate the normal vector of the luminaire plane by finding a parallel plane to it. In particular, rotating the plane that ellipse ${\Pi _e}$ lies on around its  major axis, we can obtain two circles, termed as ${\Pi _1}$ and ${\Pi _2}$, that intersect with conical surface $\Gamma_c$, and only one of them is parallel to the luminaire. ${\Pi _1}$ and ${\Pi _2}$ have different orientations, and both circles are parallel to ${y^{{\rm{a}}}}$ axis. Without loss of generality, we define planes ${\Pi _1}$ and ${\Pi _2}$ as
\begin{equation}\label{sectioncircle1}
{z^{{\rm{a}}}} = {k_1}{x^{{\rm{a}}}} + b,
\end{equation}
and
\begin{equation}\label{sectioncircle2}
{z^{{\rm{a}}}} = {k_2}{x^{{\rm{a}}}} + b,
\end{equation}
respectively, where ${k_1} =  \sqrt {\frac{{{\lambda _1} - {\lambda _2}}}{{{\lambda _2} - {\lambda _3}}}} $ and ${k_2} =  -\sqrt {\frac{{{\lambda _1} - {\lambda _2}}}{{{\lambda _2} - {\lambda _3}}}} $ decide the orientation of the circular planes \cite{3Dcirclelocation1992}, and $b$ is a constant. Given variables ${k_1}$ and ${k_2}$, there exists two possible normal vectors of the luminaire, and they can be expressed as ${\boldsymbol{n}}_{{\Pi _1}}^{{\rm{a}}}= {\left( {k_1}  ,0,{ - 1}  \right)^{\rm{T}}}$ and ${\boldsymbol{n}}_{{\Pi _{2}}}^{{\rm{a}}} =  {\left( {k_2} ,0,{ - 1}  \right)^{\rm{T}}}$.

To eliminate the duality of the normal vector, an arc of another LED luminaire is utilized to find the correct one, instead of requiring another complete luminaire. This arc can be fitted into another ellipse ${\Pi _f}$. Similarly, we further construct the elliptical cone using ${\Pi _f}$ and the camera vertex, and rotate the plane that ${\Pi _{f}}$ lies on to obtain two circles with the normal vectors of ${\boldsymbol{n}}_{{\Pi _3}}^{{\rm{a}}}$ and ${\boldsymbol{n}}_{{\Pi _4}}^{{\rm{a}}}$. Since both the arc and the complete circle are on the ceiling of the room, there exists two identical items among ${\boldsymbol{n}}_{{\Pi _1}}^{{\rm{a}}}$, $ {\boldsymbol{n}}_{{\Pi _2}}^{{\rm{a}}}$, $ {\boldsymbol{n}}_{{\Pi _3}}^{{\rm{a}}}$, and ${\boldsymbol{n}}_{{\Pi _4}}^{{\rm{a}}}$ in theory. Then, we can solve
\begin{equation}\label{nvjudge}
\mathop {\min }\limits_{i,j} {\left\| {{\boldsymbol{n}}_{{\Pi _{i}}}^{{\rm{a}}} - {\boldsymbol{n}}_{{\Pi _{j}}}^{{\rm{a}}}} \right\|},i = 1,2,j=3,4,
\end{equation}
to obtain the optimal plane ${\Pi _{i^*}}$ and the normal vector ${\boldsymbol{n}}_{{\Pi _{{i^*}}}}^{{\rm{a}}}$, where ${i^*}$ represents the optimal $i$  obtained from (\ref{nvjudge}). Based on ${\boldsymbol{n}}_{{\Pi _{{i^*}}}}^{{\rm{a}}}$, the normalized normal vector of the LED luminaire in CCS can be expressed as
\begin{equation}\label{nvCCS}
\boldsymbol{n}_{\text{LED}}^{\text{c}}=\boldsymbol{R}_{\text{a}}^{\text{c}}\frac{\boldsymbol{n}_{{{\Pi }_{i^*}}}^{\text{a}}}{\sqrt{k_{i^*}^{2}+1}}.
\end{equation}
As such, the normal vector of the luminaires can be obtained, and it will be used in the subsequent pose and location estimation. For convenience, we denote the obtained ${\boldsymbol{n}}_{{\rm{LED}}}^{\rm{c}}$ in (\ref{nvCCS}) as ${\boldsymbol{n}}_{{\rm{LED}}}^{\rm{c}} = {( {t_1 },{t_2 },{t_3 })}^{\rm{T}}$.

\subsection{Center and Mark Point Coordinates in CCS}\label{GcMcESTIMATION}
To achieve positioning, two points, ${G_e}$ and ${M}$, on the luminaire are required, and their coordinates in CCS, ${\boldsymbol {P}}_{G_e}^{\rm{c}}$ and ${\boldsymbol {P}}_{M}^{\rm{c}}$,  should be obtained.
In particular, ${G_e}$ is at the center of the luminaire, while $M$ is a mark point on the margin of the luminaire. The vector $\overrightarrow {G_eM}$ can be normalized as ${\boldsymbol{n}}_{G_eM}^{\rm{w}} = {\left( {0,1,0} \right)^{\rm{T}}}$ in WCS.
In this section, we leverage the geometric relationship between the points and their projections on the image plane to estimate ${\boldsymbol {P}}_{G_e}^{\rm{c}}$ and ${\boldsymbol {P}}_{M}^{\rm{c}}$.

We first propose to obtain the projections of ${G_e}$ and ${M}$ on the image plane. ${G'_e}$ and ${M'}$ are the projections of ${G_e}$ and ${M}$, respectively. There are two methods to obtain the projections proposed. The first method is to artificially mark. The points can be marked directly on the luminaire, as implemented in \cite{weakprojection2017PJ}. The second method designs a space-time coding model to obtain points ${G'_e}$ and ${M'}$ by taking advantage of VLC. One possible space-time coding model is designed as shown in Fig. \ref{spacetime}. The codes are sequentially transmitted by the luminaire.
It can be found that the projection point $G_e'$ is the intersection point of lines $P'G_e'$ and $M'G_e'$.  Lines $P'G_e'$ and $M'G_e'$ are the projections of lines $PG_e$ and $MG_e$, respectively. The projection point $M'$ can be found as the endpoint of the overlapped line, $M'G_e'$, of two sequential codes. Thus, the points $G_e'$ and $M'$ can be found on the image plane, and their coordinates can be obtained through image processing \cite{liVLCsmartphoneIPS}.
Even though in Fig. \ref{spacetime} there are only a few active LEDs, the space-time coding design will not affect the illumination level. This is because that the space-time coding can be carefully designed to achieve a target dimming level \cite{generalized}. In addition, all the LEDs on the same luminaire broadcast the luminaire's ID information and the world coordinates information of the center $G_e$ and the mark point $M$.

\begin{figure}[t]
\centering
\includegraphics[height=5cm,trim=0 0 0 34,clip]{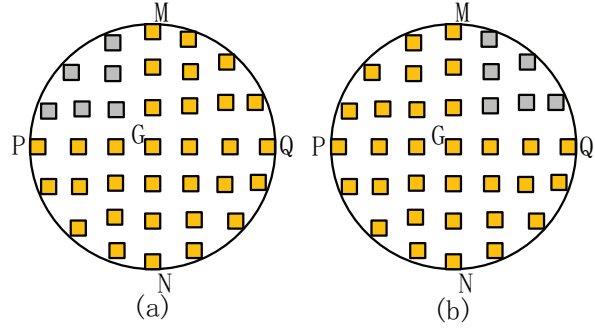}
\caption{Space-time coding model.}\label{spacetime}
\vspace{-0.5cm}
\end{figure}


Then, we exploit the relationship between the points and their projections to obtain ${\boldsymbol{P}}_{G_e}^{\rm{c}}$ and ${\boldsymbol{P}}_{M}^{\rm{c}}$. As shown in Fig. \ref{scenario}(b), the luminaire plane intersects line ${M'} O^{{\rm{c}}}$ at point $M$, while intersects line ${{G'_e} O^{{\rm{c}}}}$ at point $G_e$. Hence, we now derive equations of the luminaire plane and the lines.

Fig. \ref{xozplanefig} shows the projection of elliptical cone $\Gamma$ on the ${x^{{\rm{a}}}}O{z^{{\rm{a}}}}$ plane. Line segment ${L_1}{L_2}$ is the projection of the complete luminaire circle captured by the camera, and the length of ${L_1}{L_2}$ is equal to the diameter of the luminaire, $2R$. Line segment ${I_1}{I_2}$  is the projection of ${\Pi _e}$ on ${x^{{\rm{a}}}}O{z^{{\rm{a}}}}$ plane, and line segment $L_{1}^{'}L_{2}^{'}$ is the projection of ${\Pi _{{i^*}}}$ on ${x^{{\rm{a}}}}O{z^{{\rm{a}}}}$ plane. Moreover, lines ${{L_1} O^{{\rm{a}}}}$ and ${{L_2}O^{{\rm{a}}}}$ can be expressed as \cite{3Dcirclelocation1992}
\begin{equation}\label{lineOAOB}
{z^{{\rm{a}}}} =  \pm \sqrt { - \frac{{{\lambda _1}}}{{{\lambda _3}}}} {x^{{\rm{a}}}}.
\end{equation}

\begin{figure}[t]
\centering
\includegraphics[height=7cm,trim=0 0 0 0,clip]{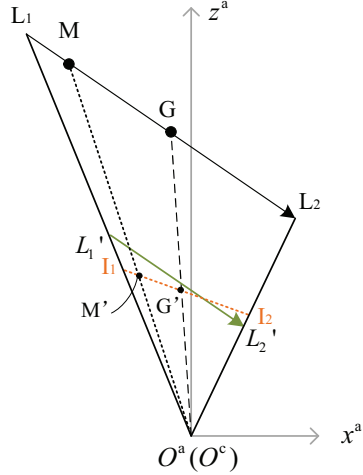}
\caption{The projection of elliptical cone on the ${x^{{\rm{a}}}}O{z^{{\rm{a}}}}$ plane.}\label{xozplanefig}
\end{figure}

By combining and solving ${z^{{\rm{a}}}} = {k_{i^*}}{x^{{\rm{a}}}} + b$ and (\ref{lineOAOB}), the coordinates of points $L_{1}^{'}$ and $L_{2}^{'}$ can be obtained. Here, we define $\boldsymbol{P}_{{L}{{_{1}^{\mathbf{'}}}}}^{\text{a}} = {\left(
{x_{L_{1}^{'}}^{{\rm{a}}}},{y_{L_{1}^{'}}^{{\rm{a}}}},{z_{L_{1}^{'}}^{{\rm{a}}}} \right)^{\rm{T}}}$
and $\boldsymbol{P}_{{L}{{_{2}^{\mathbf{'}}}}}^{\text{a}} = {\left(
{x_{L_{2}^{'}}^{{\rm{a}}}},{y_{L_{2}^{'}}^{{\rm{a}}}},{z_{L_{2}^{'}}^{{\rm{a}}}} \right)^{\rm{T}}}$.
The expression of luminaire plane can be given by
\begin{equation}\label{ABplane}
{z^{{\rm{a}}}} = {k_{{\rm{LED}}}}{x^{{\rm{a}}}} + {b_{{\rm{LED}}}},
\end{equation}
where $k_{{\rm{LED}}}=k_{i^*}$ and ${b_{{\rm{LED}}}} = \frac{{2R \cdot b}}{\left\| \boldsymbol{P}_{{L}{{_{1}^{\mathbf{'}}}}}^{\text{a}} - \boldsymbol{P}_{{L}{{_{2}^{\mathbf{'}}}}}^{\text{a}} \right\|}$.

The coordinate, ${\boldsymbol{P}}_{M'}^{\rm{c}}$, of point $M'$ in CCS can be obtained by (\ref{ICStoCCS}). Then, the coordinate, ${\boldsymbol{P}}_{M'}^{{\rm{a}}} = {\left( {x_{M'}^{{\rm{a}}}},{y_{M'}^{{\rm{a}}}},{z_{M'}^{{\rm{a}}}} \right)^{\rm{T}}}$, of point $M'$ in ACS can be further obtained by ${\boldsymbol{P}}_{M'}^{{\rm{a}}} = {\boldsymbol{R}}_{\rm{c}}^{{\rm{a}}} \cdot {\boldsymbol{P}}_{M'}^{\rm{c}}$. With ${\boldsymbol{P}}_{M'}^{{\rm{a}}} = {\left( {x_{M'}^{{\rm{a}}}},{y_{M'}^{{\rm{a}}}},{z_{M'}^{{\rm{a}}}} \right)^{\rm{T}}}$, the point-normal form equation of line $M'{O^{{\rm{a}}}}$ can be given by
\begin{equation}\label{M'Oline}
\vspace{-0.1cm}
\frac{{{x^{{\rm{a}}}}}}{{x_{M'}^{{\rm{a}}}}} = \frac{{{y^{{\rm{a}}}}}}{{y_{M'}^{{\rm{a}}}}} = \frac{{{z^{{\rm{a}}}}}}{{z_{M'}^{{\rm{a}}}}}.
\vspace{-0.1cm}
\end{equation}
From Fig. \ref{xozplanefig}, we can also observe that luminaire plane and line $M'{O^{{\rm{a}}}}$ intersect at point $M$. Thus by combining and solving (\ref{ABplane}) and (\ref{M'Oline}), we can obtain the coordinate of point $M$ in ACS, i.e., ${\boldsymbol{P}}_{M}^{{\rm{a}}}$. Similarly, we can also obtain the coordinate of point $G_e$, ${\boldsymbol{P}}_{G_e}^{{\rm{a}}}$, in ACS by using the line $G'{O^{{\rm{a}}}}$ and (\ref{ABplane}).
Finally, the coordinates of points $G_e$ and $M$ in CCS, ${\boldsymbol{P}}_{G_e}^{{\rm{c}}}$ and ${\boldsymbol{P}}_{M}^{{\rm{c}}}$, are given by ${\boldsymbol{P}}_{G_e}^{{\rm{c}}} = {\boldsymbol{R}}_{{\rm{a}}}^{\rm{c}} \cdot {\boldsymbol{P}}_{G_e}^{{\rm{a}}}$ and ${\boldsymbol{P}}_{M}^{{\rm{c}}} = {\boldsymbol{R}}_{{\rm{a}}}^{\rm{c}} \cdot {\boldsymbol{P}}_{M}^{{\rm{a}}}$, respectively.

\subsection{Pose and Location Estimation}\label{A-POSELOCATION}\
Based on ${\boldsymbol{n}}_{{\rm{LED}}}^{\rm{c}}$, ${\boldsymbol{P}}_{G_e}^{{\rm{c}}}$ and ${\boldsymbol{P}}_{M}^{{\rm{c}}}$, the pose and location of the user in WCS can be further obtained with VLC as follows.
%

Based on single-view geometry theory, the relationship between ${\boldsymbol{n}}_{{\rm{LED}}}^{\rm{c}} = {( {t_1 },{t_2 },{t_3 })}^{\rm{T}}$ and ${\boldsymbol{n}}_{{\rm{LED}}}^{\rm{w}} = {(0,0,-1)}^{\rm{T}}$ is ${\boldsymbol{n}}_{{\rm{LED}}}^{\rm{c}} = \left({\boldsymbol{R}}_{\rm{c}}^{\rm{w}}\right)^{\rm{T}} \cdot {\boldsymbol{n}}_{{\rm{LED}}}^{\rm{w}}$, which can further derived as 
\begin{equation}\label{Raccordingnv}
\left\{ \begin{array}{l}
\sin \theta = {t_1},   \\
-\cos \theta \sin \varphi = {t_2}, \\
-\cos \theta \cos \varphi = {t_3} .
\end{array} \right.
\end{equation}
The rotation angles $\varphi$ and $\theta$ corresponding to the $x^c$-axis and $y^c$-axis can be obtained by solving (\ref{Raccordingnv}).
In addition, $\overrightarrow {G_eM}$ can be normalized as ${\boldsymbol{n}}_{G_eM}^{\rm{w}} = {\left( {0,1,0} \right)^{\rm{T}}}$ in WCS. The normalized form of $\overrightarrow {G_eM}$ will be ${\boldsymbol{n}}_{G_eM}^{\rm{c}} = \frac{{{{\boldsymbol{P}}_{G_e}^{{\rm{c}}} - {\boldsymbol{P}}_{M}^{{\rm{c}}}}}}{{\left\| {{\boldsymbol{P}}_{G_e}^{{\rm{c}}} - {\boldsymbol{P}}_{M}^{{\rm{c}}}} \right\|}} = {\left( {{s_1},{s_2},{s_3}} \right)^{\rm{T}}}$. We also have ${\boldsymbol{n}}_{G_eM}^{\rm{w}} = {\boldsymbol{R}}_{\rm{c}}^{\rm{w}} \cdot {\boldsymbol{n}}_{G_eM}^{\rm{c}}$, which can be further derived as
\begin{equation}\label{Raccordinghor}
\left\{ \begin{array}{l}
 \sin \psi \cos \theta = {s_1}, \\
 \cos \psi \cos \varphi  + \sin \psi \sin \theta \sin \varphi = {s_2}, \\
 - \cos \psi \sin \varphi  + \sin \psi \sin \theta \cos \varphi = {s_3}.
\end{array} \right.
\end{equation}
Then, with the obtained $\varphi$ and $\theta$, the rotation angle $\psi$ corresponding to $z^c$-axis can be obtained from (\ref{Raccordinghor}). ${\boldsymbol{R}}_{\rm{c}}^{\rm{w}}$ can also be obtained according to (\ref{Rcw}).
By substituting the obtained WCS and CCS coordinates of $G$ and $M$, i.e., ${\boldsymbol{P}}_{G_e}^{{\rm{w}}}$, ${\boldsymbol{P}}_{G_e}^{{\rm{c}}}$, ${\boldsymbol{P}}_{M}^{{\rm{w}}}$ and ${\boldsymbol{P}}_{M}^{{\rm{c}}}$, into
\begin{equation}\label{T}
{\boldsymbol{t}}_{\bf{c}}^{\rm{w}} = \frac{1}{2}\left[ {\left( {{\boldsymbol{P}}_{G_e}^{{\rm{w}}} - {\boldsymbol{R}}_{\rm{c}}^{\rm{w}} \cdot {\boldsymbol{P}}_{G_e}^{{\rm{c}}}} \right) + \left( {{\boldsymbol{P}}_{M}^{{\rm{w}}} - {\boldsymbol{R}}_{\rm{c}}^{\rm{w}} \cdot {\boldsymbol{P}}_{M}^{{\rm{c}}}} \right)} \right],
\end{equation}
${\boldsymbol{t}}_{\rm{c}}^{\rm{w}}$ can be obtained. In (\ref{T}), we average ${\boldsymbol{t}}_{\rm{c}}^{\rm{w}} = {{\boldsymbol{P}}_{G_e}^{{\rm{w}}} - {\boldsymbol{R}}_{\rm{c}}^{\rm{w}} \cdot {\boldsymbol{P}}_{G_e}^{{\rm{c}}}}$ and ${\boldsymbol{t}}_{\rm{c}}^{\rm{w}} = {{\boldsymbol{P}}_{M}^{{\rm{w}}} - {\boldsymbol{R}}_{\rm{c}}^{\rm{w}} \cdot {\boldsymbol{P}}_{M}^{{\rm{c}}}}$ to reduce estimation errors.
In this way, the pose and location of the camera, i.e., ${\boldsymbol{R}}_{\rm{c}}^{\rm{w}}$ and ${\boldsymbol{t}}_{\rm{c}}^{\rm{w}}$, have been obtained.

\section{AO-V-PA Positioning Algorithm}\label{AO-V-PA}
In Section \ref{V-PCA}, an accurate positioning algorithm V-PCA has been proposed. However, in practice, the camera may not always be able to capture complete circular luminaire images due to limited FOV and possible occlusions. In such circumstances, the center of luminaire and the mark point cannot be determined by the space-time coding design or artificial marking, and, thus, V-PCA cannot work.
To circumvent these practical challenges, we propose an anti-occlusion positioning algorithm based on V-PCA, called AO-V-PA, which can achieve positioning using incomplete circular luminaire images. In particular, the normal vector of the luminaire in CCS is first estimated using the same process as V-PCA. Then, we fit the ellipses from arcs, and the center of the ellipse is approximated as the projection of the luminaires's center on the image plane. Finally, the pose and location of the user can be obtained based on the estimated centers and the normal vector ${\boldsymbol{n}}_{{\rm{LED}}}^{\rm{c}}$. Since the principle of ${\boldsymbol{n}}_{{\rm{LED}}}^{\rm{c}}$ estimation in OA-V-PA is similar to the one in V-PCA, we will not reproduce it in this section for brevity. The main differences between OA-V-PA and V-PCA are introduced below.

%

\subsection{Center Coordinates in CCS}
The image plane has only incomplete luminaires that are defined as incomplete ellipses. In such circumstances, the artificial mark point may not be captured, and the space-time coding can also not be completely received. Thus, we propose to approximate the projection of the luminaire's center using the center of the fitted ellipse, so as to calculate the pose of the user. Although the approximation of the projection center may yield additional estimation errors, it can increase the feasibility  and robustness of the positioning algorithm in practice, which will be verified in the simulation and experimental results.

We use $G_e$ and $G_f$ to represent the centers of two circular luminaires, which are captured by the camera and lie on two ellipses, $\Pi_e$ and $\Pi_f$, respectively on the image plane. The projections of $G_e$ and $G_f$ on the image plane are $G'_e$ and $G'_f$, respectively.
The general form functions of ellipses $\Pi_e$ and $\Pi_f$ can be estimated according to (\ref{imgellipse}). Then, the known parameters $A$, $B$, $C$, $D$, and $E$  can be used to obtain the centers of the ellipses.
The coordinate of the center of ellipse $\Pi_e$ in ICS can be expressed as
\begin{equation}\label{ellipsecenter}
\left\{ \begin{array}{l}
x_{e}^{\text{i}}=\frac{BE-2CD}{4AC-{{B}^{2}}}, \\
y_{e}^{\text{i}}=\frac{BD-2AE}{4AC-{{B}^{2}}},
\end{array} \right.
\end{equation}
and it can be used to approximate $\boldsymbol{P}_{G'_e}^{\text{i}} \approx {{\left( x_{e}^{\text{i}},y_{e}^{\text{i}} \right)}^{\text{T}}}$. Similarly, the center of another fitted ellipse $\Pi_f$ on the image plane can also be approximated with $\boldsymbol{P}_{G'_f}^{\text{i}} \approx {{\left( x_{f}^{\text{i}},y_{f}^{\text{i}} \right)}^{\text{T}}}$. Next, $\boldsymbol{P}_{G'_e}^{\text{i}}$ and $\boldsymbol{P}_{G'_f}^{\text{i}}$ are respectively used to obtain the point-normal form equations of lines $G'_eO^{\rm{a}}$ and $G'_fO^{\rm{a}}$, which are combined and solved with (\ref{ABplane}) to obtain the coordinates of the point $G_e$ and $G_f$ in ACS . Thus, we obtain the coordinates,  ${{\boldsymbol{P}}_{G_e}^{\rm{a}}}$ and ${{\boldsymbol{P}}_{G_f}^{\rm{a}}}$, which are respectively substituted into ${{\boldsymbol{P}}_{G_e}^{\rm{c}}} = {\boldsymbol{R}}_{{\rm{a}}}^{\rm{c}} \cdot {{\boldsymbol{P}}_{G_e}^{{\rm{a}}}}$ and ${{\boldsymbol{P}}_{G_f}^{\rm{c}}} = {\boldsymbol{R}}_{{\rm{a}}}^{\rm{c}} \cdot {{\boldsymbol{P}}_{G_f}^{{\rm{a}}}}$ to obtain ${{\boldsymbol{P}}_{G_e}^{\rm{c}}}$ and ${{\boldsymbol{P}}_{G_f}^{\rm{c}}}$. Consequently, the coordinates of two luminaires' centers in CCS can be estimated.

\subsection{Pose and Location Estimation}
This step estimates the pose and location of the user. Similar to Section \ref{V-PCA}, we first calculate the rotation matrix. The rotation angles $\varphi$ and $\theta$ corresponding to the $x^c$-axis and $y^c$-axis can be obtained according to (\ref{Raccordingnv}). The rotation angle $\psi$ corresponding to the $z^{\rm{c}}$-axis can be obtained by using the estimated luminaires' centers, i.e., ${{G}_{e}}$ and ${{G}_{f}}$.
In particular, the normalized form of $\overrightarrow{{{G}_{e}}{{G}_{f}}}$ in WCS is calculated first with the known ${{\boldsymbol{P}}_{G_e}^{\rm{w}}}$ and ${{\boldsymbol{P}}_{G_f}^{\rm{w}}}$, and we denote it by $\boldsymbol{n}_{{{G}_{e}}{{G}_{f}}}^{\text{w}}={{\left( {{g}_{1}},{{g}_{2}},{{g}_{3}} \right)}^{\text{T}}}$ for simplicity.
Note that ${{\left( {{g}_{1}},{{g}_{2}},{{g}_{3}} \right)}^{\text{T}}}$ cannot be equal to ${\left( {0,1,0} \right)^{\rm{T}}}$. This is because that $\overrightarrow{{{G}_{e}}{{G}_{f}}}$ can have arbitrary direction since the camera may capture two luminaires at any locations. Then, the normalized form of $\overrightarrow{{{G}_{e}}{{G}_{f}}}$ in CCS is calculated by  ${{\boldsymbol{n}}_{{{G}_{e}}{{G}_{f}}}^{\rm{c}}} = \frac{{{{\boldsymbol{P}}_{G_e}^{\rm{c}}} - {{\boldsymbol{P}}_{G_f}^{\rm{c}}}}}{{\left\| {{{\boldsymbol{P}}_{G_e}^{\rm{c}}} - {{\bf{P}}_{G_f}^{\rm{c}}}} \right\|}} = {\left( {{h_1},{h_2},{h_3}} \right)^{\rm{T}}}$. According to ${\boldsymbol{n}}_{{{G}_{e}}{{G}_{f}}}^{\rm{c}} = \left({\boldsymbol{R}}_{\rm{c}}^{\rm{w}}\right)^{\rm{T}} \cdot {\boldsymbol{n}}_{{{G}_{e}}{{G}_{f}}}^{\rm{w}}$, we have
\begin{equation}\label{Raccortinghorcomplex}
\left\{ \begin{array}{l}
{g_1}\cos \psi \cos \theta  + {g_2}\sin \psi \cos \theta  - {g_3}\sin \theta = {h_1} , \\
- {g_1}\left( {\sin \psi \cos \varphi  + \cos \psi \sin \theta \sin \psi } \right) + {g_2}\left( {\cos \psi \cos \varphi  + \sin \psi \sin \theta \sin \psi } \right) + {g_3}\cos \theta \sin \varphi = {h_2},  \\
 {g_1}\left( {\sin \psi \sin \varphi  + \sin \psi \sin \theta \cos \varphi } \right) - {g_2}\left( {\cos \psi \sin \varphi  + \sin \psi \sin \theta \cos \varphi } \right) + {g_3}\cos \theta \cos \varphi = {h_3} .
\end{array} \right.
\end{equation}
Then, with the obtained $\varphi$ and $\theta$, the rotation angle $\psi$ corresponding to $z^c$-axis can be calculated by using the auxiliary angle formula in trigonometric function to solve (\ref{Raccortinghorcomplex}). ${\boldsymbol{R}}_{\rm{c}}^{\rm{w}}$ can also be obtained according to (\ref{Rcw}).

Similar to (\ref{T}), ${\boldsymbol{t}}_{\bf{c}}^{\bf{w}}$ can be calculated by substituting the obtained WCS and CCS coordinates of $G_e$ and $G_f$, i.e., ${{\boldsymbol{P}}_{G_e}^{\rm{w}}}$, ${{\boldsymbol{P}}_{G_e}^{\rm{c}}}$, ${{\bf{P}}_{G_f}^{\rm{w}}}$, and ${{\boldsymbol{P}}_{G_f}^{\rm{c}}}$, into
\begin{equation}\label{T2}
{\boldsymbol{t}}_{\bf{c}}^{\bf{w}} = \frac{1}{2}\left[ {\left( {\boldsymbol{P}_{{{G}_{e}}}^{\text{w}} - {\boldsymbol{R}}_{\rm{c}}^{\rm{w}} \cdot \boldsymbol{P}_{{{G}_{e}}}^{\text{c}}} \right) + \left( {\boldsymbol{P}_{{{G}_{f}}}^{\text{w}} - {\boldsymbol{R}}_{\rm{c}}^{\rm{w}} \cdot \boldsymbol{P}_{{{G}_{f}}}^{\text{c}}} \right)} \right].
\end{equation}
Consequently, the pose and location of the user, i.e., ${\boldsymbol{R}}_{\rm{c}}^{\rm{w}}$ and ${\boldsymbol{t}}_{\bf{c}}^{\bf{w}}$, have been estimated.

Above all, we can observe that V-PCA and OA-V-PA share most of the positioning process, while have some key differences. In particular, when a complete circular luminaire and an incomplete circular luminaire are captured by the camera, the V-PCA can be used to estimate the location; if there are only incomplete circular luminaires captured, the OA-V-PA can be used. Here, we term this dynamic integration of V-PCA and OA-V-PA as V-PA, which can select appropriate positioning algorithms according to the captured luminaires.  According to (\ref{Raccortinghorcomplex}), when two luminaires captured by the camera are not coplanar but parallel, V-PA will still be applicable.

\section{Implementation of V-PA}
In this section, we establish an experimental prototype to verify the feasibility and efficiency of the proposed algorithm. The illustration of the experimental setup is shown in Fig. \ref{testbed}, and the device specifications are given in Table \ref{device}.

\begin{figure*}[b]
\centering
\subfigure[Experimental environment.]{\includegraphics[height=7.2cm,trim=0 0 0 0,clip]{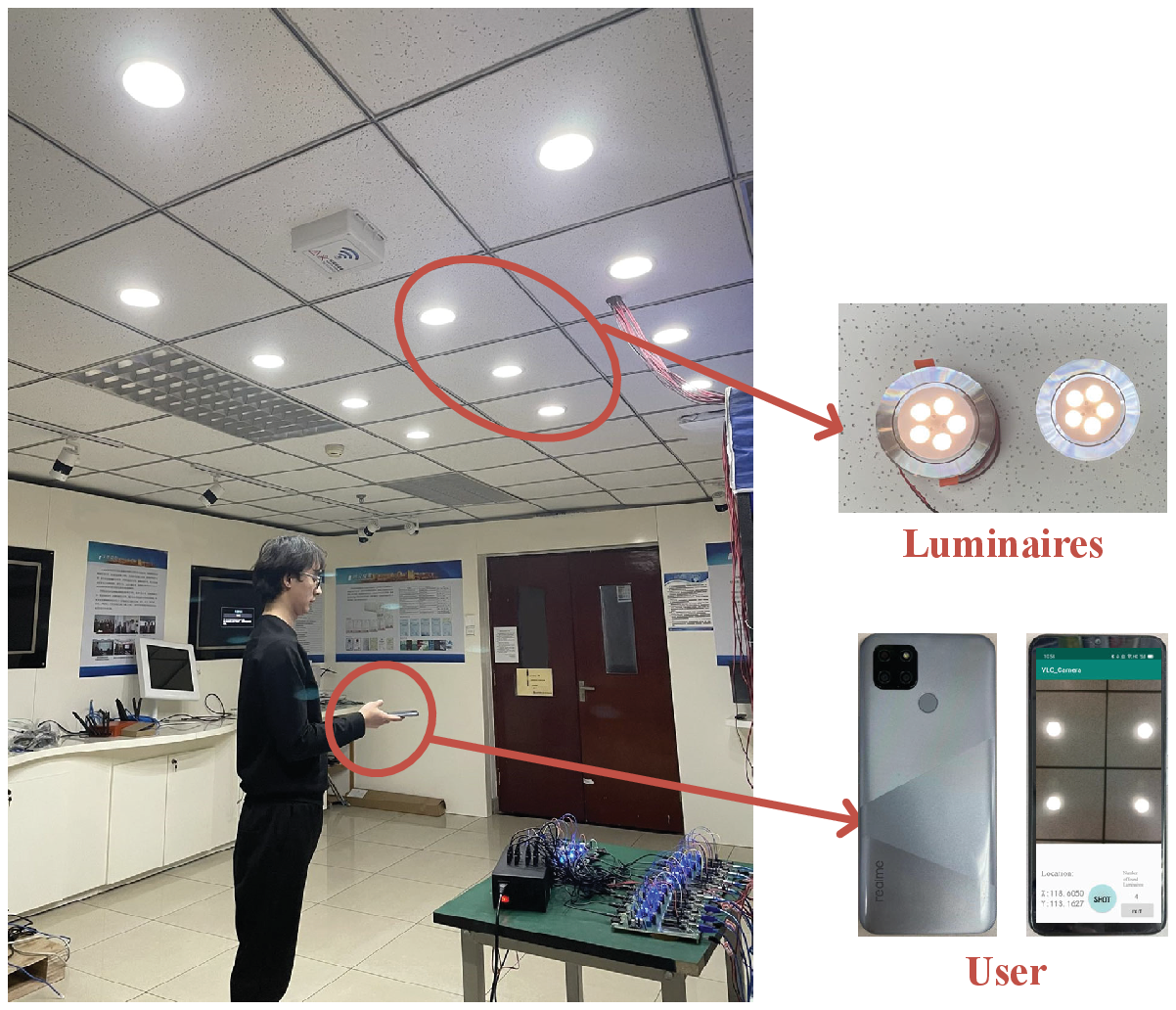}}
\subfigure[Block diagram.]{\includegraphics[height=5.8cm,trim=0 0 0 0,clip]{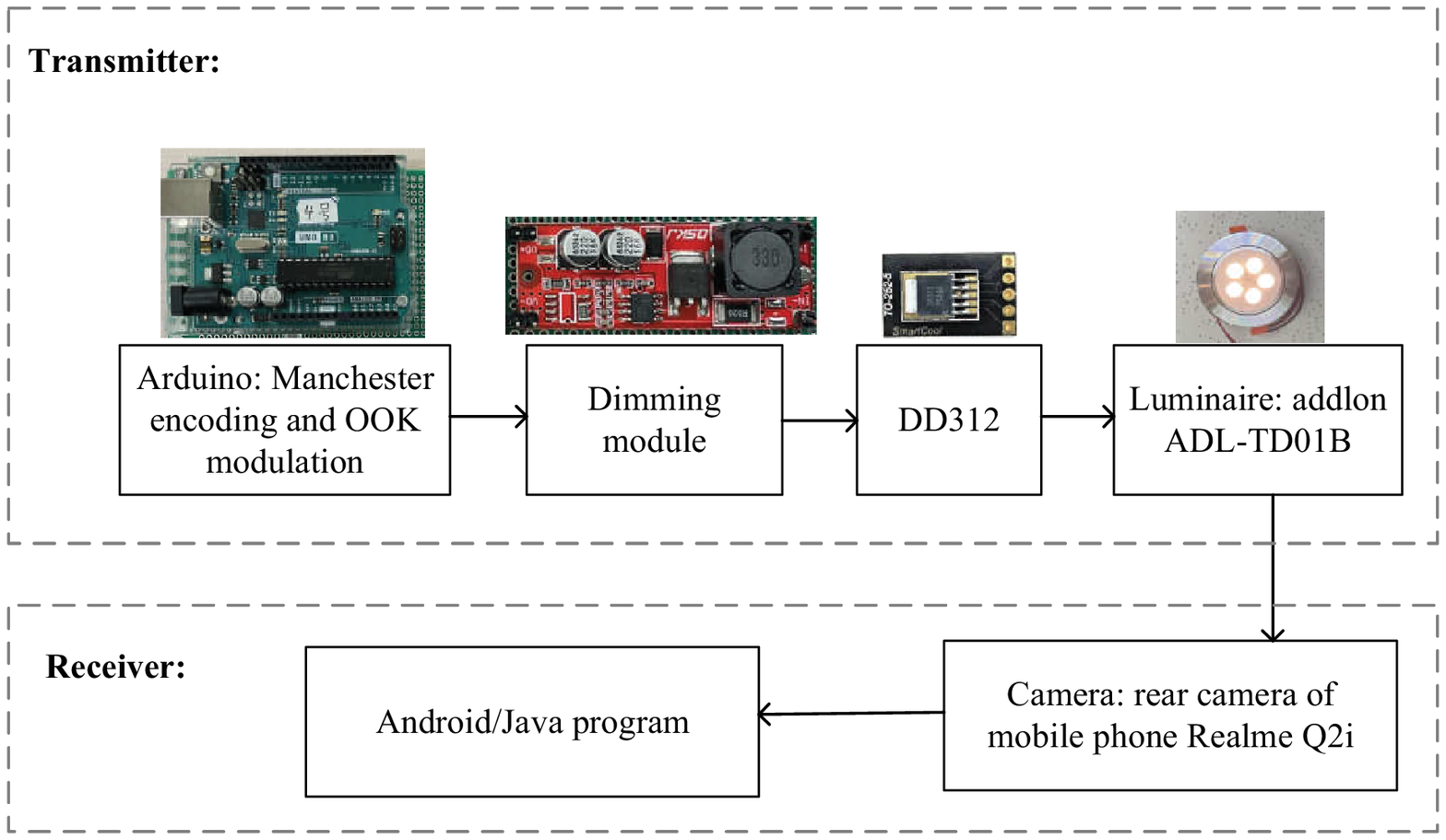}}
\caption{Experimental setup. }\label{testbed}
\end{figure*}
\subsection{Implementations}
The experiment is carried out in an indoor environment. The height of the transmitters is 2.84 m. There are 12 circular luminaires mounted on the ceiling of the room to provide illumination and positioning information. As long as two luminaires are captured, whether they are complete or not, the user can be located. In the experiment, we choose to artificially mark the center and the mark point on the luminaire. All the LEDs in a luminaire transmit the same information containing the luminaire's coordinate and ID.

\begin{table}
		\centering \caption{Device Specifications}\label{device}
            \footnotesize
		\begin{tabular}{|c|c|c|}\hline
			\multicolumn{2}{|c|} {\bf{Parameters}}& {\bf{Model/Values}}\\
            \hline
            \hline
            \multirow{4}{*}{Transmitter}
            & Model & Addlon ADL-TD01B \\
            \cline{2-3}
            & Semi-angle, $\Phi_{1/2}$ & $67.5^\circ$ \\
            \cline{2-3}
            &Transmitter Power, $P_t$ & $5$ $\rm{W}$ \\
            \cline{2-3}
            &Radius  & $3.5$ $\rm{cm}$ \\
            \hline
            \multirow{6}{*}{Camera}
            & Model & Rear camera of Realme Q2i \\
            \cline{2-3}
            &\multirow{3}{*}{Intrinsic parameters}
            & $f=3.462 \rm{mm}$\\
            \cline{3-3}
            & \multirow{3}{*}{} & $d_x=d_y=1.12\times10^{-4}cm$\\
            \cline{3-3}
            & \multirow{3}{*}{} &  $\left(u_0,v_0\right)=\left(2080,1560\right)$ \\
            \cline{2-3}
            & Resolution & $4160 \times 3120$\\
            \cline{2-3}
            & Exposure time & 1.25 \rm{ms} \& 6.67 \rm{ms}\\
            \hline
		\end{tabular}
\end{table}

To successfully transmit the information, Arduino Studio is used to generate signals with Manchester encoding and on-off keying (OOK) modulation. Then, the dimming module is designed to control the transmitted power of the LED, after which, DD312, a constant current LED driver is used to drive the LED to emit light without flicking. In this way, the information of the transmitters are broadcasted to the air.
To successfully receive the geometric information, a single camera is calibrated by a conventional method \cite{zhou2019robust} to extract information from 2D images. The camera we use is the rear Complementary Metal Oxide Semiconductor (CMOS) camera of Realme Q2i. Once an image is captured, image processing is implemented by an Android/Java program, and the region of interest (ROI) is exploited for image processing to enhance the accuracy. Then, after contour extraction and curve fitting, the elliptic curve equation and the pixel coordinates of the projections can be obtained from the processed images.
In addition, to successfully receive the modulated VLC signals that contain the IDs and world coordinates of the luminaires, we leverage the rolling shutter effect of the CMOS image sensor to capture the fringe image of the luminaire, and we demodulate the fringes to receive the VLC information. Finally, the geometric feature information and the VLC information will be input into an Android/Java program, and the pose and location of the user can be output by implementing the proposed algorithm in the program.
\begin{figure*}[t]
\centering
\subfigure[Original image.]{\includegraphics[height=4cm,trim=0 0 0 0,clip]{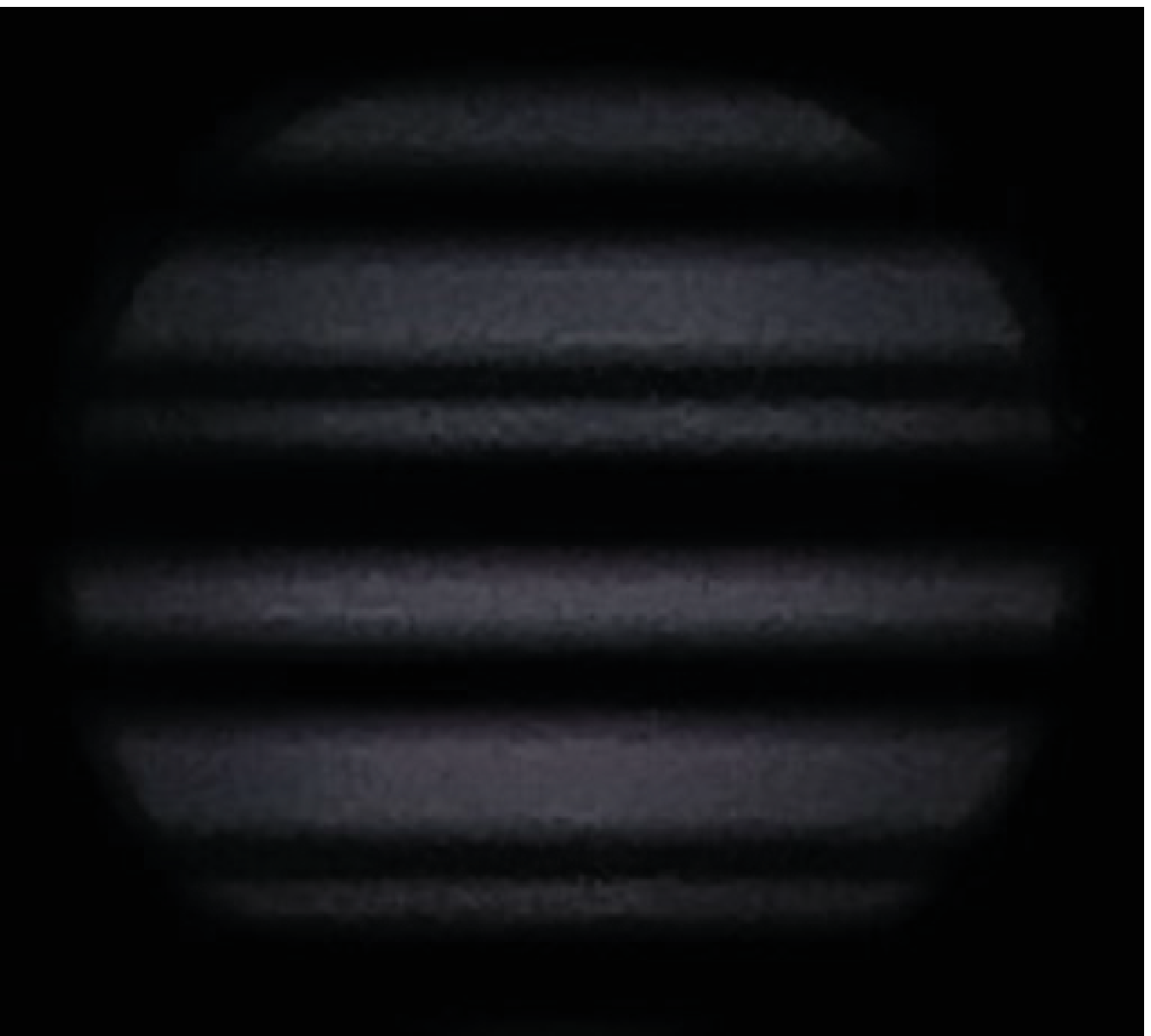}}
\subfigure[Filtered image.]{\includegraphics[height=4cm,trim=0 0 0 0,clip]{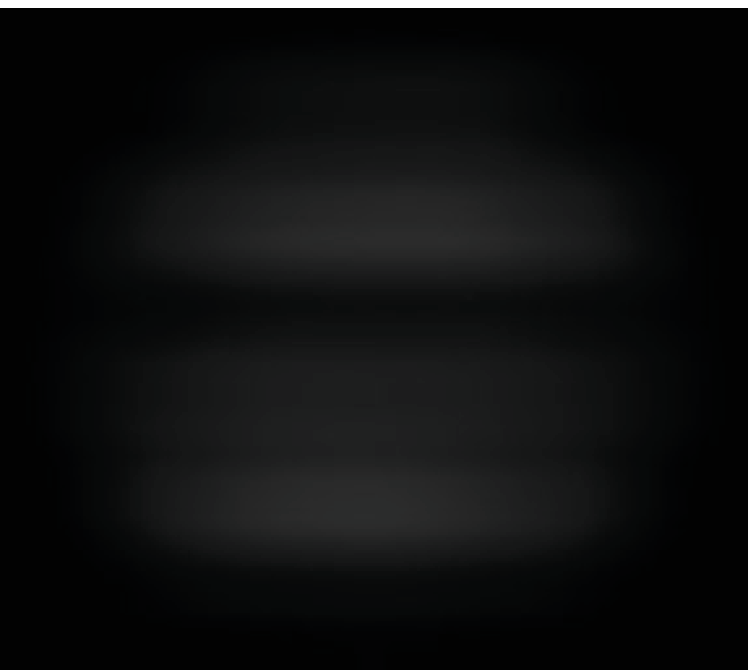}}
\subfigure[Fitted outline.]{\includegraphics[height=4cm,trim=0 0 0 0,clip]{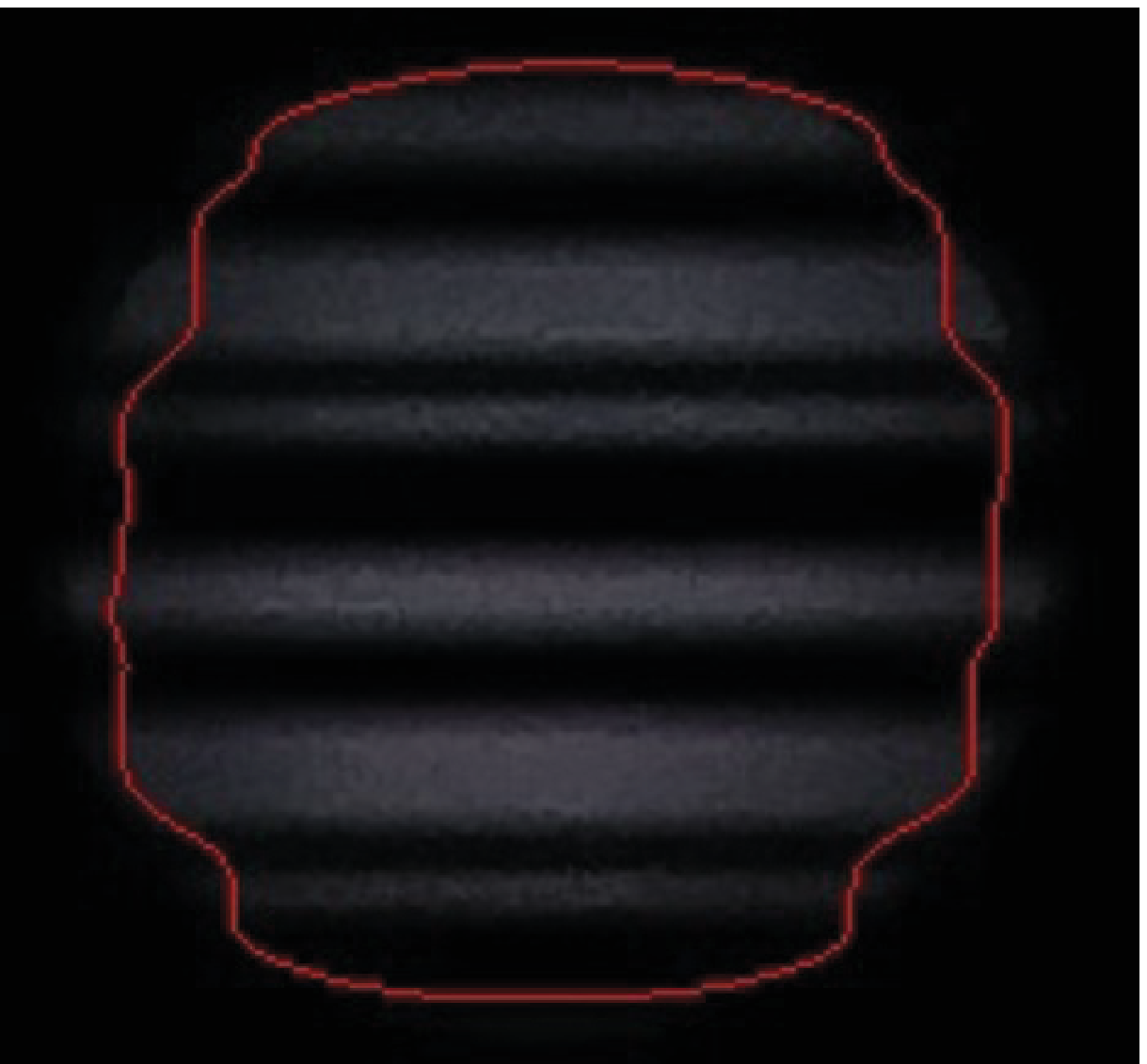}}
\caption{Fringe images in image processing. }\label{fringe}
\end{figure*}

\subsection{Image Processing}
At the receiver side, both geometric feature information and VLC signals are supposed to be obtained. However, the fringe image that is easy to demodulate does not have enough clear outlines to fit the elliptic curve equation.
For instance, Fig. \ref{fringe} (a) is an original fringe image, which will turn into Fig. \ref{fringe} (b) after filtering. It can be observed that the outline of Fig. \ref{fringe} (b) is rather vague, which can lead to further detection errors as shown in Fig. \ref{fringe} (c).
Therefore, we propose a fused image processing scheme, and the flow diagram is shown in Fig. \ref{flow} (a). In the proposed scheme, once a user needs to locate, the image sensor will take two photos under different exposure times (ET), and we develop an application as shown in Fig. \ref{flow} (b) for convenience. In this application, it can be automatically shoot under two ETs, and the interval between two ETs can be artificially controlled.
When this interval is extremely short, the two captured images have approximately the same pose and location, and, thus, the pixel coordinates in two images are the same. Under a long ET, the luminaire is captured as a complete circle or an ellipse, while under a short ET, it is captured as a fringe image, as shown in Fig. \ref{ET} (a) and Fig. \ref{ET} (b), respectively.

\begin{figure*}
  \centering
  \subfigure[Flow diagram.]{\includegraphics[height=10cm]{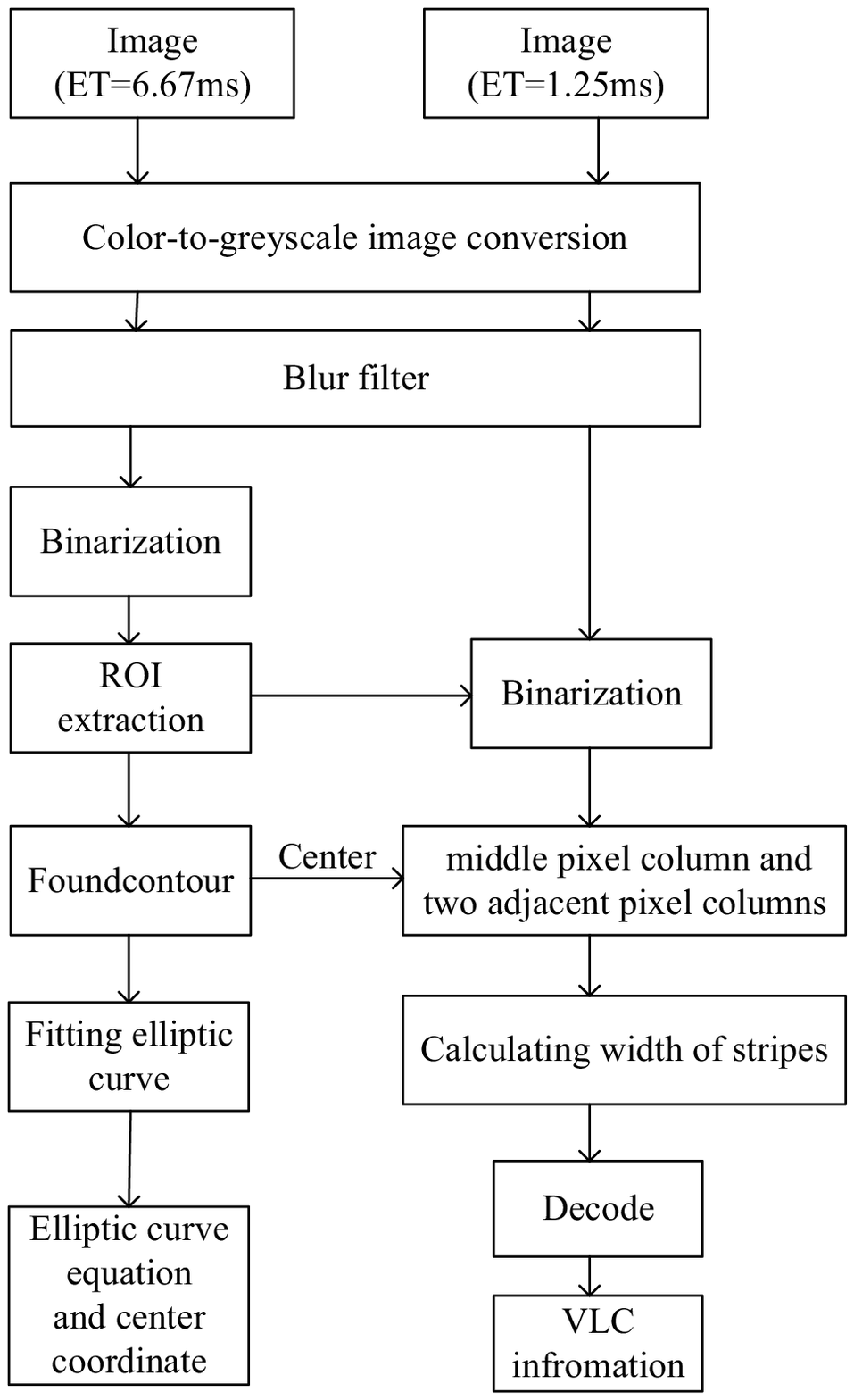}}
  \subfigure[APP interface.]{\includegraphics[height=9.6cm]{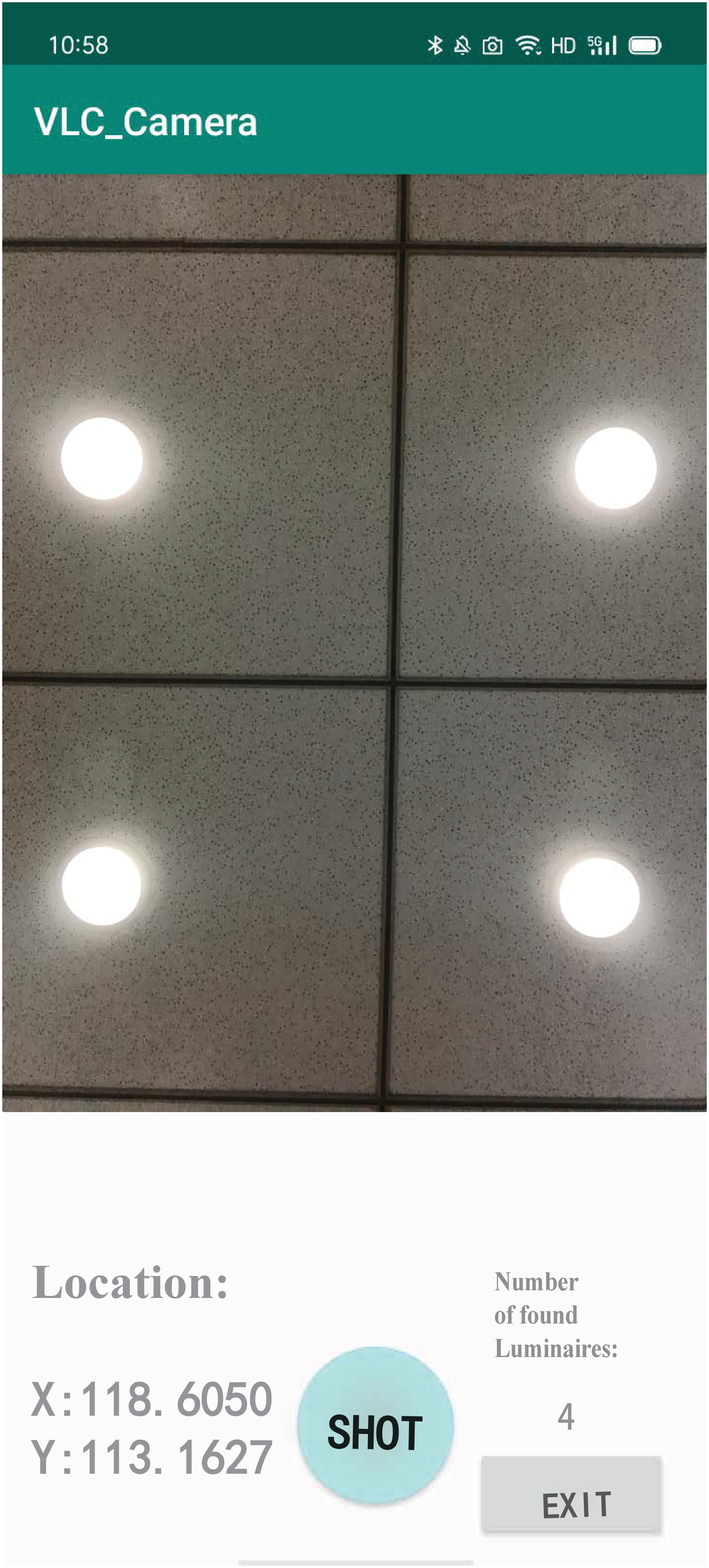}}
  \caption{Flow diagram of image processing and APP interface.}\label{flow}
\end{figure*}

\begin{figure}
\centering
\subfigure[Image captured under a long exposure time (ET=6.67 ms).]{\includegraphics[height=4cm,trim=0 0 0 0,clip]{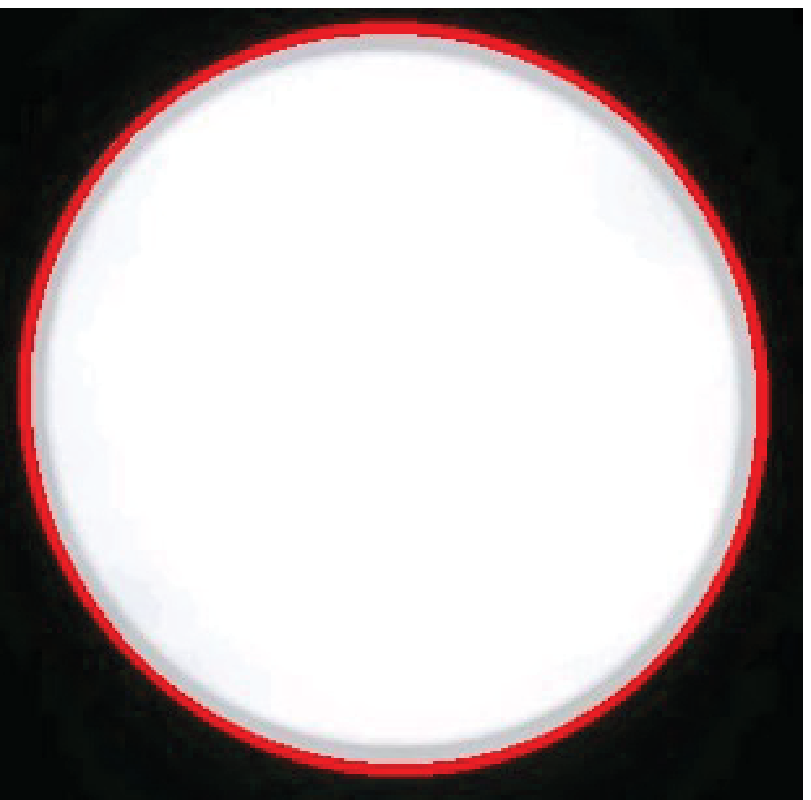}}
\subfigure[Image captured under a short exposure time (ET=1.25 ms).]{\includegraphics[height=4cm,trim=0 0 0 0,clip]{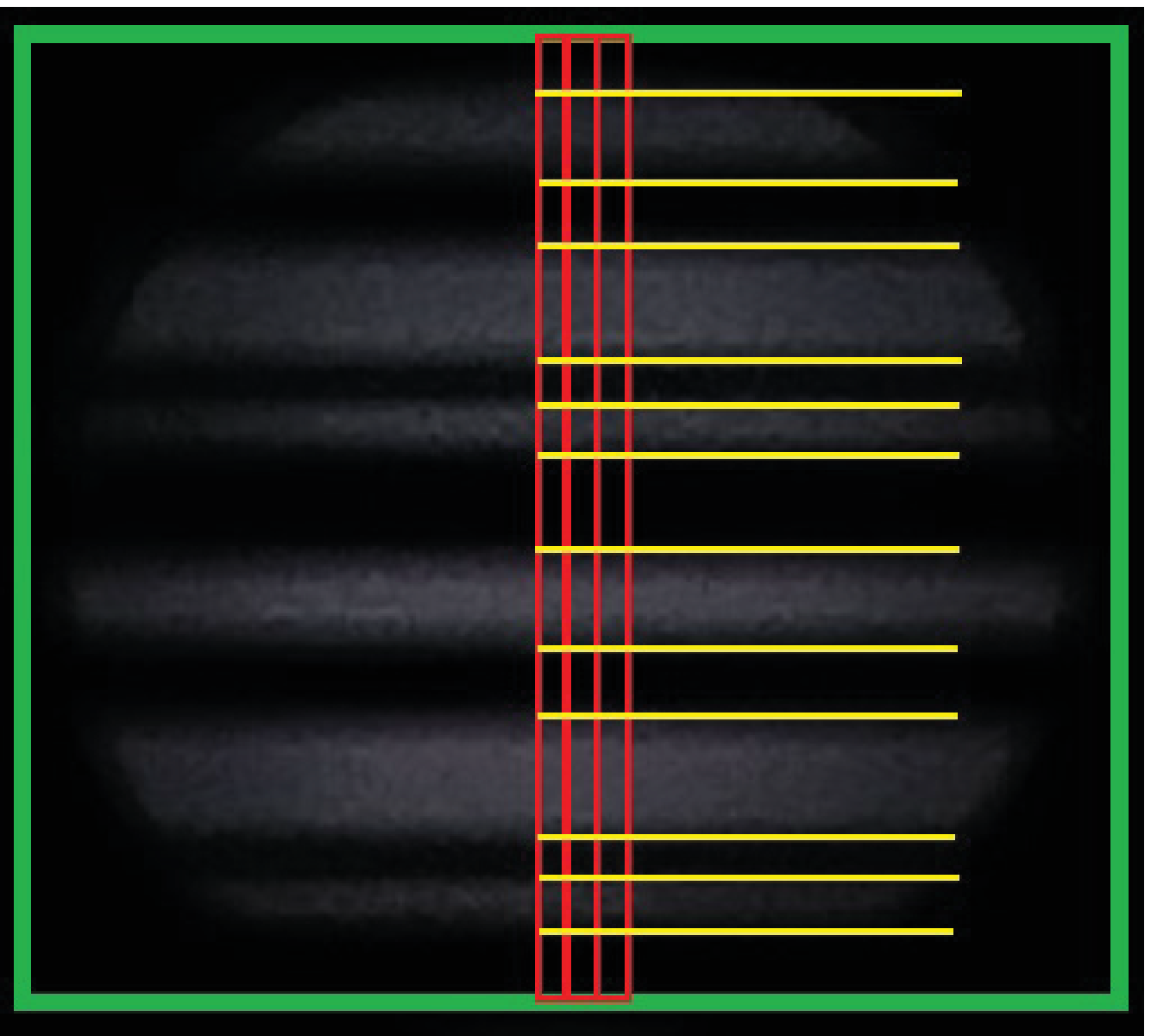}}
\caption{Images in two-step image processing. }\label{ET}
\end{figure}

Fig. \ref{ET} (a) and Fig. \ref{ET} (b) are used to obtain the geometric feature information and VLC signals as illustrated in Fig. \ref{flow} (a). First, Fig. \ref{ET} (a) and Fig. \ref{ET} (b) are converted from RGB images to greyscale images. For the images under the long ET, the filtered image is binarized to make the area of the bright region clear, so as to facilitate the ROI extraction. Since this image is considered to have the same pixels coordinates with the one under the short ET, the extracted ROIs can also be used in the image captured under the short ET.
Then, Hough Transform \cite{ballard1981hough} is used for object detection. The points on the ellipse contour, the projection point of the luminaire's center, and the projection point of the mark point are all extracted, and the extracted points on the contour are used to fit the elliptic curve equation with the least squares fitting. With this equation, the center of the ellipse can also be obtained according to (\ref{ellipsecenter}) when the center or the mark point is not detected.
For images under the short ET, a binarization process is performed in each ROI. The threshold of this binarization is higher than that under the long ET, since bright and dark fringes here have to be more distinguishable to reduce error in decoding. Then, according to the pixel coordinate of the center in the image, the middle pixel column and its two adjacent pixel columns are extracted, as shown in the regions of red boxes in Fig. \ref{ET} (b). The three pixel columns are selected here to reduce the calculation error. The widths of the bright and dark fringes in red boxes are further calculated. Then, the VLC signals are obtained by decoding the width values of the fringes.

\section{Simulation and Experimental Results}
In this section, we evaluate the performance of V-PCA and OA-V-PA via simulation and experimental results.
\subsection{Simulation Setup}
We consider a rectangular room, in which four luminaires are deployed on the ceiling of the room. The system parameters are listed in Table \ref{parameters}. Unless otherwise specified, the radius of the LED luminaire is 15 cm, and the image noise is modeled as a white Gaussian noise having an expectation of zero and a standard deviation of 2 pixels \cite{bai2021computer}. All statistical results are averaged over independent runs of 10,000 samples. For each simulation sample, the location and the pose of the user are generated randomly in the room, and the tilted angle of the user is also generated randomly on the premise that at least two incomplete luminaires can be captured. To mitigate the impact of image noise, the pixel coordinate is obtained by processing 20 images for each location \cite{bai2021computer}.
\begin{table}[htbp]
		\centering \caption{System parameters}\label{parameters}
            \footnotesize
		\begin{tabular}{|c|c|}\hline
			\multicolumn{1}{|c|} {\bf{Parameters}}&\bf{Values}\\\hline
             \hline
			LED semi-angle, ${\Phi _{1/2}}$&$60^\circ$ \\\hline
            Principal point of camera & $\left( {{u_{0,}}{v_0}} \right) = (320,240)$ \\\hline
            Physical sizes  & ${d_x} = {d_y} = 1.25 \times 10^{-3}$ \\\hline
            Room size & 8 ${\rm{m}}$  $\times$ 6 $ {\rm{m}}$ $\times$ 3 $ {\rm{m}}$ \\\hline
            Location of two LEDs $(\rm{m})$ & (2,2,3), (6,2,3),(2,4,3), (6,4,3) \\\hline
		\end{tabular}
\end{table}

We evaluate the positioning performance in terms of location and pose accuracy. We define the location error as $E_\textrm{loc} = \left\| {{\boldsymbol{r}}_{{\rm{true}}}^{\rm{w}} - {\boldsymbol{r}}_{{\rm{est}}}^{\rm{w}}} \right\|$, where ${\boldsymbol{r}}_{{\rm{true}}}^{\rm{w}} = \left( {x_{{\rm{true}}}^{\rm{w}},y_{{\rm{true}}}^{\rm{w}},z_{{\rm{true}}}^{\rm{w}}} \right)^{\rm{T}}$ and ${\boldsymbol{r}}_{{\rm{est}}}^{\rm{w}} = \left( {x_{{\rm{est}}}^{\rm{w}},y_{{\rm{est}}}^{\rm{w}},z_{{\rm{est}}}^{\rm{w}}} \right)^{\rm{T}}$ are true and estimated world coordinates of the user, respectively.
In addition, with the true rotation ${\boldsymbol{R}}_{\rm{c,true}}^{\rm{w}}$, we quantify the relative error of the estimated pose, ${\boldsymbol{R}}_{\rm{c,est}}^{\rm{w}}$, by ${E_\textrm{pos}}\left( \%  \right) = \left\| {{{\boldsymbol{q}}_{{\rm{true}}}} - {\bf{q}}_{\rm{est}}} \right\|/\left\| {\boldsymbol{q}}_{\rm{est}} \right\|$ \cite{EPnE2009}, where ${{\boldsymbol{q}}_{{\rm{true}}}}$ and ${\boldsymbol{q}}_{\rm{est}}$ are the normalized quaternions of the true and the estimated rotation matrices, i.e., ${\boldsymbol{R}}_{\rm{c,true}}^{\rm{w}}$ and ${\boldsymbol{R}}_{\rm{c,est}}^{\rm{w}}$, respectively.

\subsection{Simulation Results}
\subsubsection{Effect of image noise and the luminaire's radius on positioning performance}
In practice, since two captured luminaires can be complete or incomplete, both V-PCA and OA-V-PA algorithms may be used. Therefore, we evaluate the performance of V-PA that adaptively uses V-PCA and OA-V-PA according to the practical scenario, and we compare the positioning accuracy of V-PA with other existing algorithms. In particular, we conduct the IMU-based VLP (V-IMU) algorithm \cite{IMU2019} and PnP algorithm as baseline schemes. V-IMU also uses the circular luminaire for positioning, and the PnP is a typical positioning algorithm in computer vision field. Note that IMU typically measures the pitch and roll angles with an error of ${{{1.5}^ \circ }}$, and the azimuth angle with an error of ${{{15}^ \circ }}$ \cite{partIMU2019PJ}. Therefore, for the V-IMU algorithm, we impose random measurement errors that satisfy the uniform distribution of $\left[ {0,{{1.5}^ \circ }} \right]$ on the pitch and roll angles, and random measurement errors that satisfy the uniform distribution of $\left[ {0,{{15}^ \circ }} \right]$  on the azimuth angle. In addition, the PnP algorithm requires at least four LEDs for positioning. However, not all samples capture four luminaires simultaneously. Thus we select four LEDs evenly from two arcs captured by the camera, such as the LEDs at the locations of $M$, $N$, $P$ and $Q$ on the luminaire in Fig. \ref{spacetime}.

Figure \ref{CDF-3types} compares the performance of our proposed algorithm with PnP and V-IMU algorithms in terms of the cumulative distribution function (CDF) of the location error. From this figure, we can observe that V-PA has the best performance among the three algorithms. As shown in Fig. \ref{CDF-3types}, V-PA is able to achieve a 90th percentile accuracy of about 10 cm. In contrast, the PnP algorithm can achieve a 78th percentile accuracy of about 10 cm. Meanwhile, for the V-IMU algorithm, only a 16th percentile accuracy of about 10 cm is achieved. This is because the V-IMU algorithm requires the tilted angles to be small to achieve reliable approximation for the projection radius.

\begin{figure}[t]
\centering
\includegraphics[height=6.6cm,trim=0 0 0 0,clip]{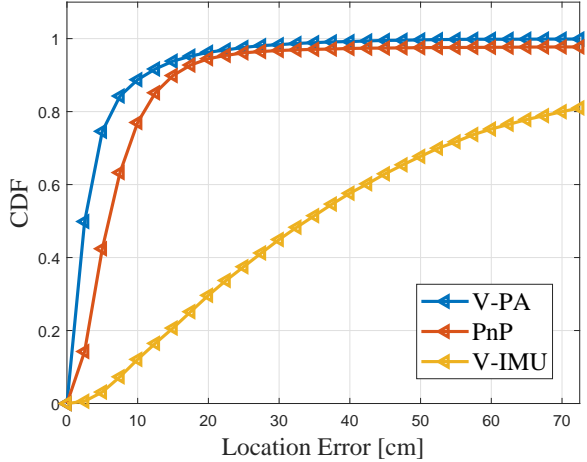}
\caption{CDF of the location error.}\label{CDF-3types}
\end{figure}
\begin{figure}[t]
\centering
\subfigure[]{\includegraphics[height=6cm,trim=0 0 6 10,clip]{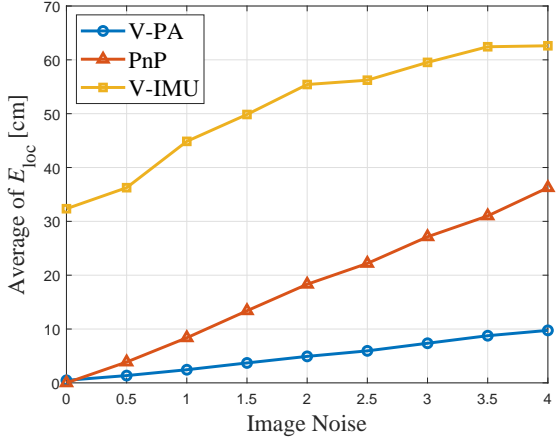}}
\subfigure[]{\includegraphics[height=6cm,trim=0 0 6 10,clip]{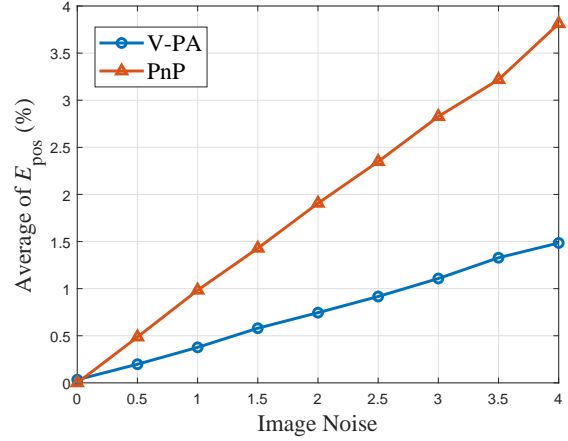}}
\caption{The mean of $E_{\rm{loc}}$ and $E_{\rm{pos}}$ versus the image noise. }\label{imagenoise}
\end{figure}

Figure \ref{imagenoise} shows the effect of image noise on V-PA in terms of the average $E_\textrm{loc}$ and $E_\textrm{pos}$. The image noise is modeled as a white Gaussian noise with an expectation of zero and a standard deviation ${\sigma _n}$ that ranges from 0 to 4 pixels. From Fig. \ref{imagenoise} (a), we can observe that the average $E_\textrm{loc}$ is 0 cm when ${\sigma _n}$ is 0 for V-PA and PnP. That indicates that the location errors of them are totally caused by the image noise. The average $E_\textrm{loc}$ of V-IMU increases from 31 cm to 61 cm as the image noise increases from 0 to 4 pixels, while that of PnP increases from 0 cm to 37 cm. Besides, as the image noise increases, the average $E_\textrm{loc}$ of V-PA varies in a relative small range, from 0 cm to 10 cm, which indicates that V-PA is more robust to image noise. In addition, Fig. \ref{imagenoise} (b) further illustrates the average $E_\textrm{pos}$ versus the image noise. The average $E_\textrm{pos}$ of V-PA increases from $0\%$ to $1.5\%$ as radius increases, while that of PnP increases from $0\%$ to $3.8\%$. These results also verify that, compared with the PnP algorithm, V-PA is more robust to image noise.

\begin{figure}[tbp]
\centering
\subfigure[]{\includegraphics[height=6cm,trim=5 0 10 10,clip]{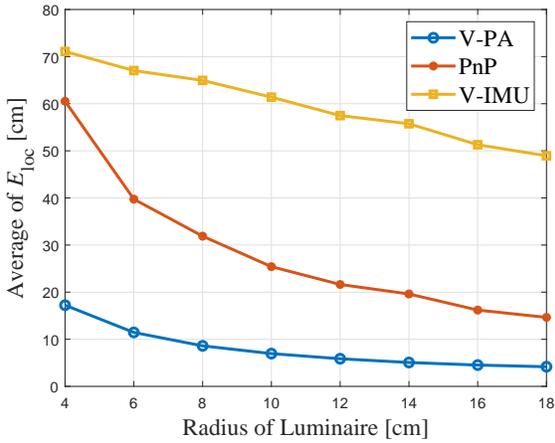}}
\subfigure[]{\includegraphics[height=6cm,trim=6 0 10 10,clip]{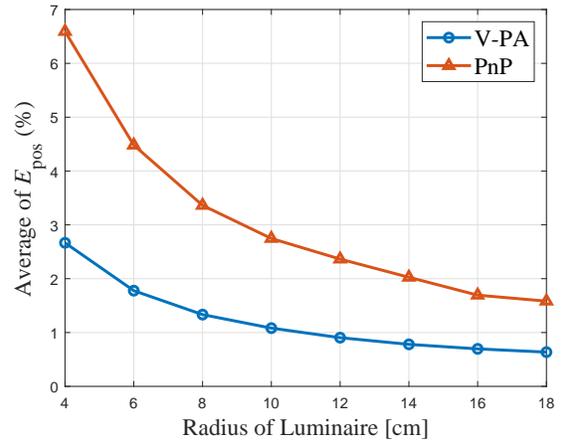}}
\caption{The mean of $E_{\rm{loc}}$ and $E_{\rm{pos}}$ versus the radius of the luminaire. }\label{radius}
\end{figure}

Figure \ref{radius} evaluates the effect of the radius of the luminaire on positioning accuracy. This performance is captured by the average $E_\textrm{loc}$ and the average $E_\textrm{pos}$ with the radius varying from 4 cm to 18 cm. As shown in Fig. \ref{radius} (a), the accuracy of location estimation improves as the radius of luminaire increases. V-PA has the best performance among the three algorithms. For V-PA, the average $E_\textrm{loc}$ remains below 18 cm for all radii. For PnP, the average $E_\textrm{loc}$ decreases from 61 cm to 15 cm as the radius of the luminaire increases. Meanwhile, for V-IMU, the average $E_\textrm{loc}$ decreases from 70 cm to 49 cm. We can observe that, when the radius is below 8 cm, the average $E_\textrm{loc}$ of both V-PA and PnP decreases fast as radius increases. This is due to the fact that, when the captured geometric features have small size, the positioning accuracy is affected by the image noise more seriously. Fig. \ref{radius} (b) compares the average $E_\textrm{pos}$ of V-PA and PnP. Here, we observe V-PA is more accurate than PnP. The average $E_\textrm{pos}$ of V-PA decreases from $2.8\%$ to $0.7\%$ as the radius increases, while that of PnP decreases from $6.6\%$ to $1.7\%$. This also indicates that the pose accuracy of the PnP algorithm is affected more by the small size of luminaire than that of the V-PA algorithm.


\subsubsection{Effect of the arc length of the captured luminaires on positioning performance}
We then also compare the positioning performance of V-PCA and OA-V-PCA algorithms, in which, we evaluate how the length of the contours that extracted from the captured luminaires affect the positioning performance. For comparison, we conduct four schemes: i) OA-V-PA with two semicircles extracted, ii) OA-V-PA with two superior arcs extracted, iii) V-PCA with a circle and a semicircle extracted, and iv) V-PCA with two circles extracted.

\begin{figure}[t]
\centering
\includegraphics[height=6.6cm,trim=0 0 0 0,clip]{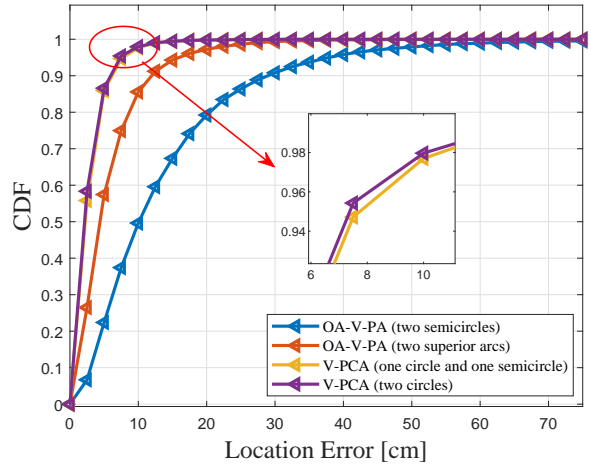}
\caption{Cumulative Distribution Function (CDF) of the location error.}\label{CDF-diffarc}
\end{figure}

Figure \ref{CDF-diffarc} compares the performance of V-PCA and OA-V-PA in terms of the CDF of the location error. We observe that V-PCA performs better than OA-V-PA. The performance of the two V-PCA schemes are close, and the performance of V-PCA with two circles is slightly better than that of V-PCA with a circle and a semicircle. V-PCA schemes are able to achieve a 97th percentile accuracy of about 10 cm, which is slightly better than the performance of V-PA in Fig. \ref{CDF-3types}. This is because OA-V-PA is also adaptively selected together with V-PCA in Fig. \ref{CDF-3types}. As shown in Fig. \ref{CDF-3types}, OA-V-PA with two semicircles can achieve an 86th percentile accuracy of about 10 cm, while OV-V-PA with two superior arcs can only achieve a 50th percentile accuracy of about 10 cm. Therefore, the location accuracy improves as the available arc length of captured luminaire increases.

\begin{figure}[t]
\centering
\subfigure[]{\includegraphics[height=6cm,trim=0 0 0 0,clip]{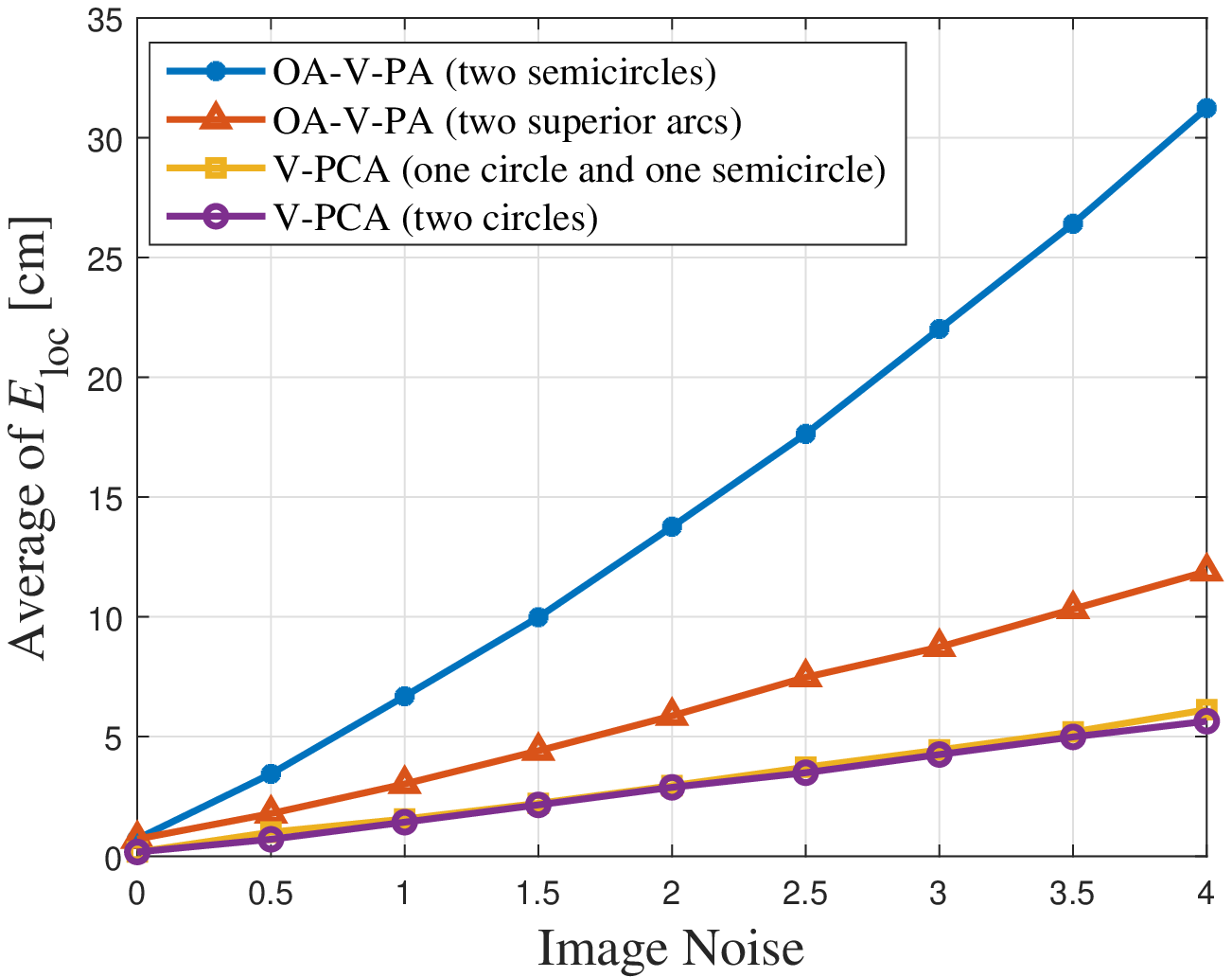}}
\subfigure[]{\includegraphics[height=6cm,trim=0 0 0 0,clip]{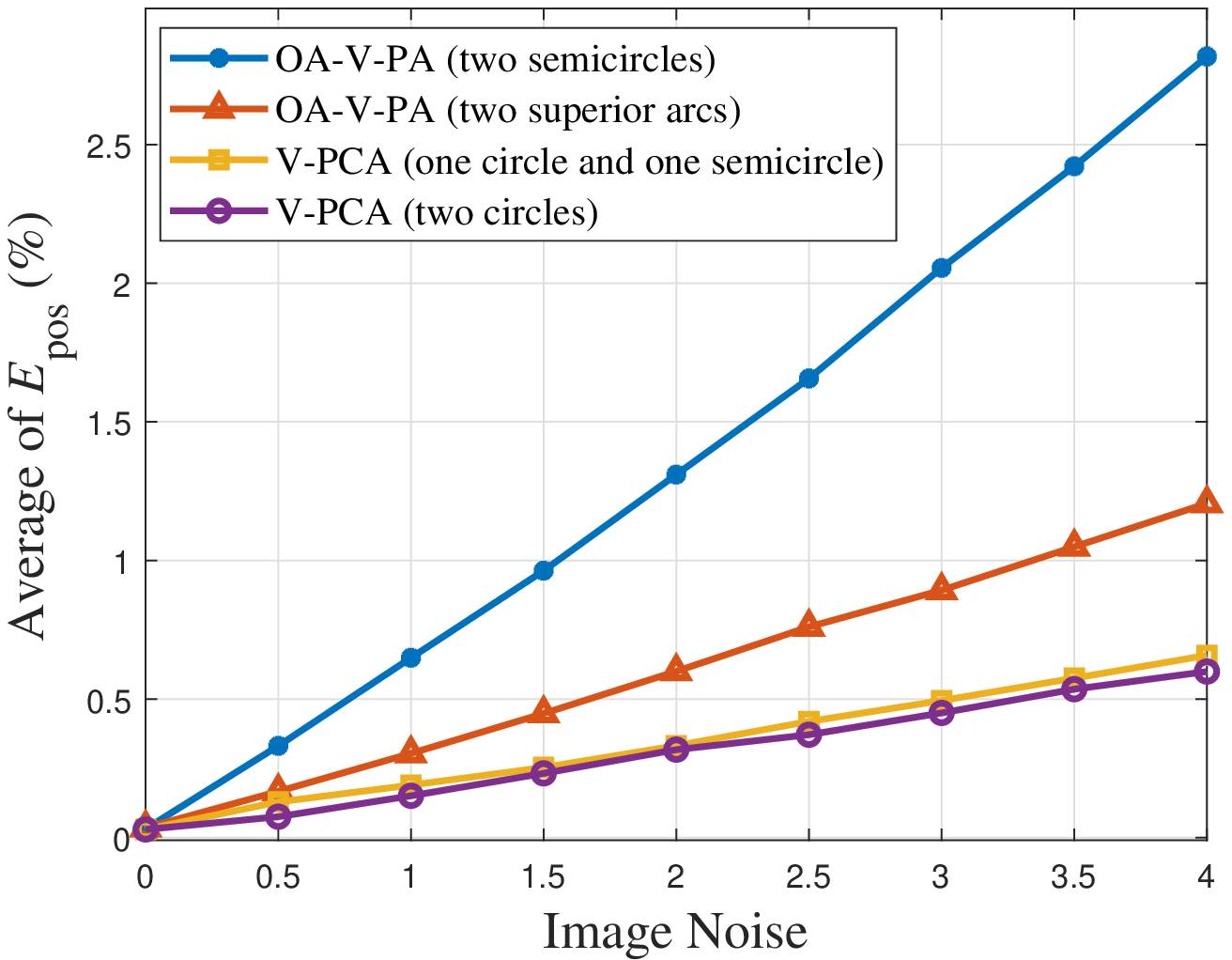}}
\caption{The mean of $E_{\rm{loc}}$ and $E_{\rm{pos}}$ versus the image noise. }\label{imagenoise-diffarc}
\end{figure}

\begin{figure}[t]
\centering
\subfigure[]{\includegraphics[height=6cm,trim=5 0 10 10,clip]{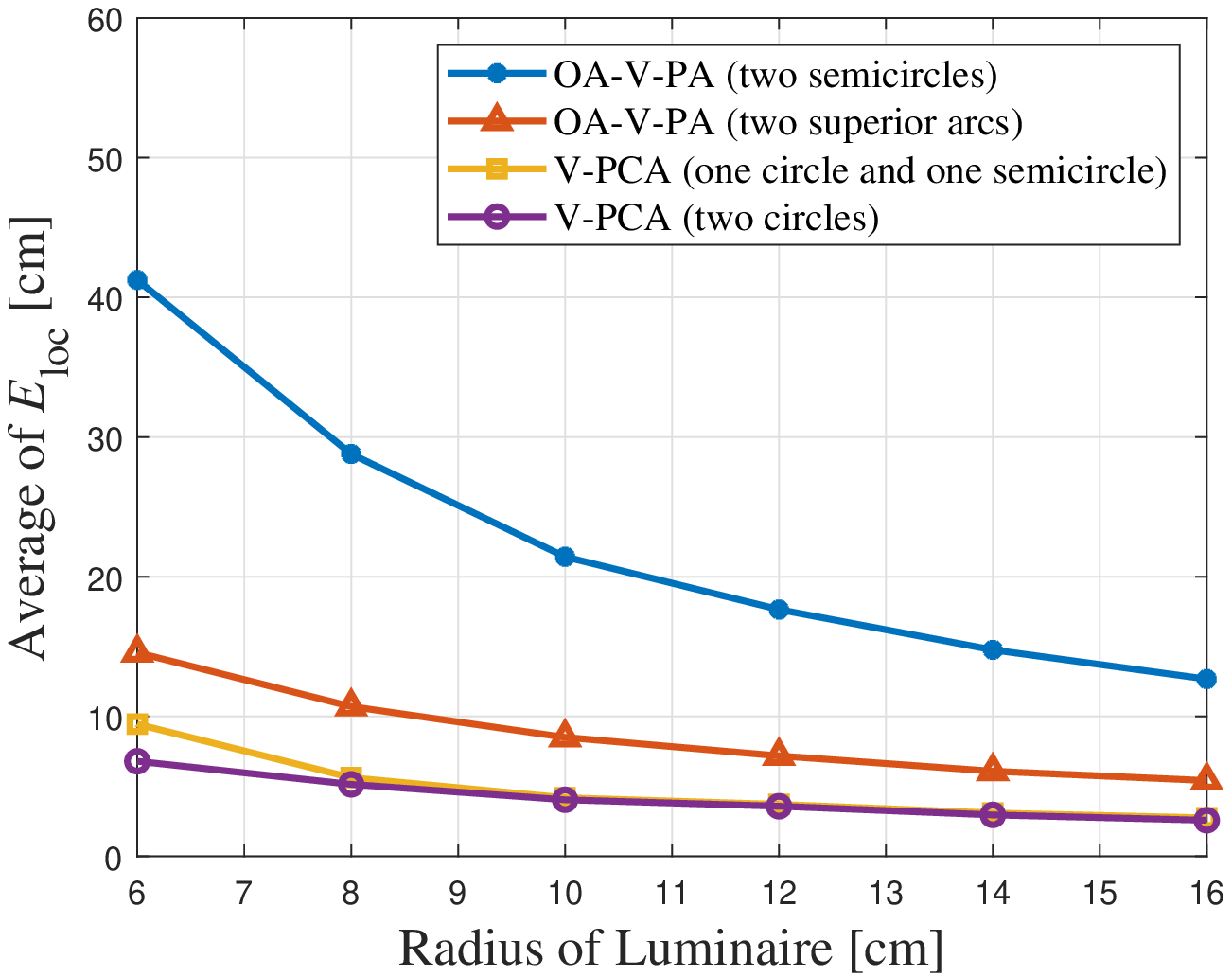}}
\subfigure[]{\includegraphics[height=6cm,trim=6 0 10 6,clip]{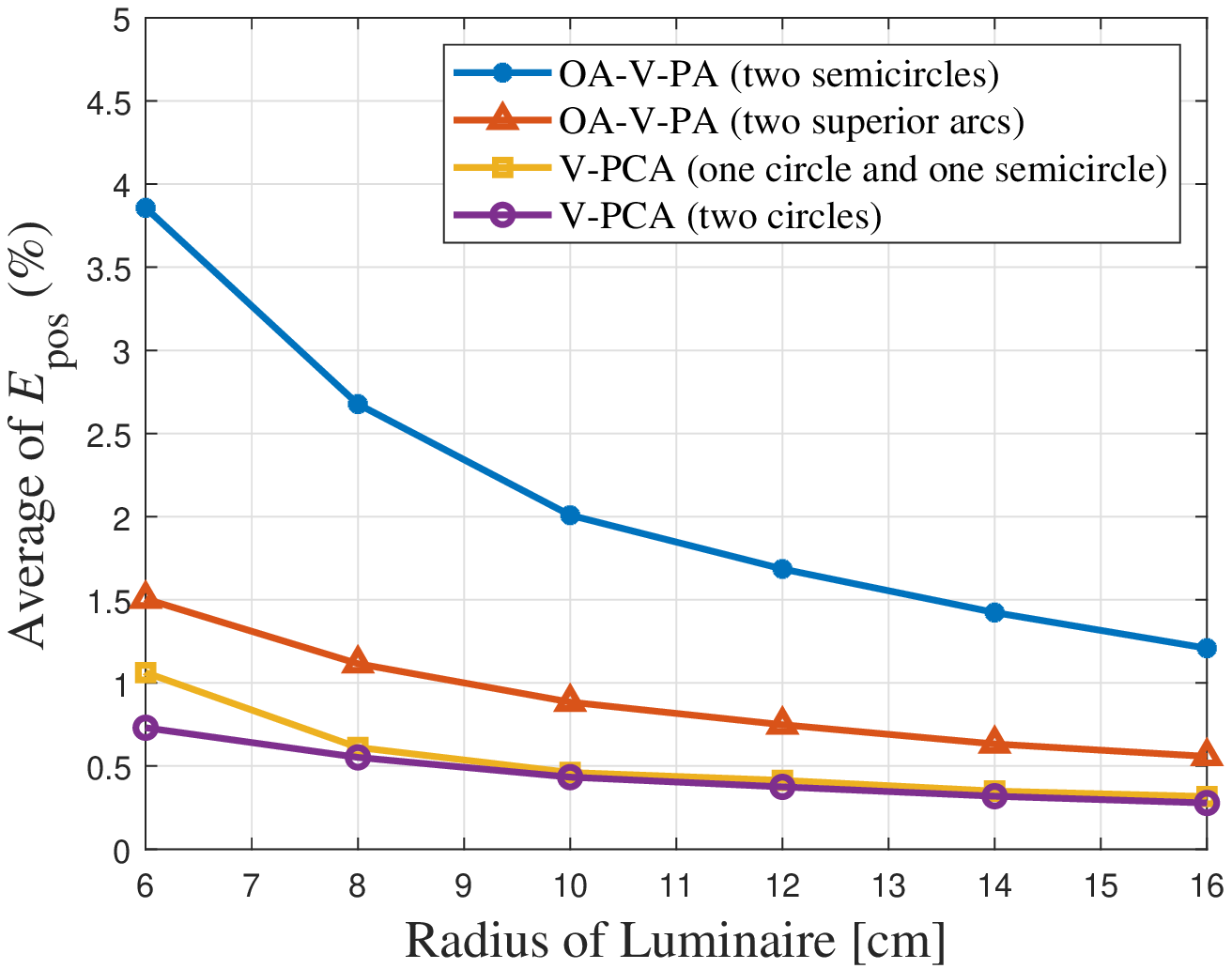}}
\caption{The mean of $E_{\rm{loc}}$ and $E_{\rm{pos}}$ versus the radius of the luminaire. }\label{radius-diffarc}
\end{figure}

Figure \ref{imagenoise-diffarc} shows the effect of the arc length of the extracted contour on positioning accuracy as image noise increases. The image noise is modeled with a standard deviation ${\sigma _n}$ that ranges from 0 to 4 pixels. From Fig. \ref{imagenoise-diffarc}(a), we can observe that the average $E_\textrm{loc}$ is 0 cm when ${\sigma _n}$ is zero for V-PCA, while it is about 0.7 cm for OA-V-PA. This is because, in OA-V-PA, the projection of the luminaires center is an approximative value, which leads to slight deviations. Moreover, the average $E_\textrm{loc}$ of OA-V-PA with two semicircles increases from 0.7 cm to 31.2 cm as the image noise increases from 0 to 4 pixels, while that of OA-V-PA with two superior arcs increases from 0.7 cm to 11.8 cm. For V-PCA with a circle and a semicircle, the average $E_{loc}$ increases from 0 cm to 5.6 cm, while it increases from 0 cm to 6.1 cm for V-PCA with two circles. Fig. \ref{imagenoise-diffarc} (b) further illustrates the average $E_\textrm{pos}$ versus the image noise.
Here, we can see that the average $E_\textrm{pos}$ of all schemes increase as the image noise increases. Both Fig. \ref{imagenoise-diffarc} (a) and Fig. \ref{imagenoise-diffarc} (b) indicate that the robustness to image noise increases when the arc length of captured luminaire increases. In the situation in which a complete circle is captured, this change is not obvious whether the other captured luminaire is complete or incomplete. This is due to the fact that the other captured luminaire is only used to eliminate the duality of the normal vector but not estimate the location of the user.

Figure \ref{radius-diffarc} evaluates the effect of the arc length of the extracted contour on positioning accuracy as the radius of the luminaire increases. The radius varies from 6 cm to 16 cm. As shown in Fig. \ref{radius-diffarc}, V-PCA with two circles has the best performance in terms of the average $E_\textrm{loc}$ and the average $E_\textrm{pos}$. In Fig. \ref{radius-diffarc} (a), the average $E_\textrm{loc}$ of OA-V-PA with two semicircles decreases from 41.2 cm to 12.7cm, while the average $E_\textrm{loc}$ of OA-V-PA with two superior arcs decreases from 14.6 cm to 5.4 cm. We can also observe that the averages of $E_\textrm{loc}$ of two V-PCA schemes are both below 10 cm. 
Fig. \ref{radius-diffarc} (b) compares the average $E_\textrm{pos}$ of OV-V-PA and V-PCA versus the radius of luminaire. We can observe that V-PCA performs better than OA-V-PA. This also verifies that the positioning performance increases as the arc length of the captured luminaire increases. Although the accuracy of OA-V-PA is slightly lower than V-PCA, it improves as the arc length of the captured luminaire increases.

\subsection{Experimental Results}

\begin{figure}[pbht]
\centering
\subfigure[The user is tilted ${{{0}^ \circ }}$ ]{\includegraphics[height=6cm,trim=0 0 0 0,clip]{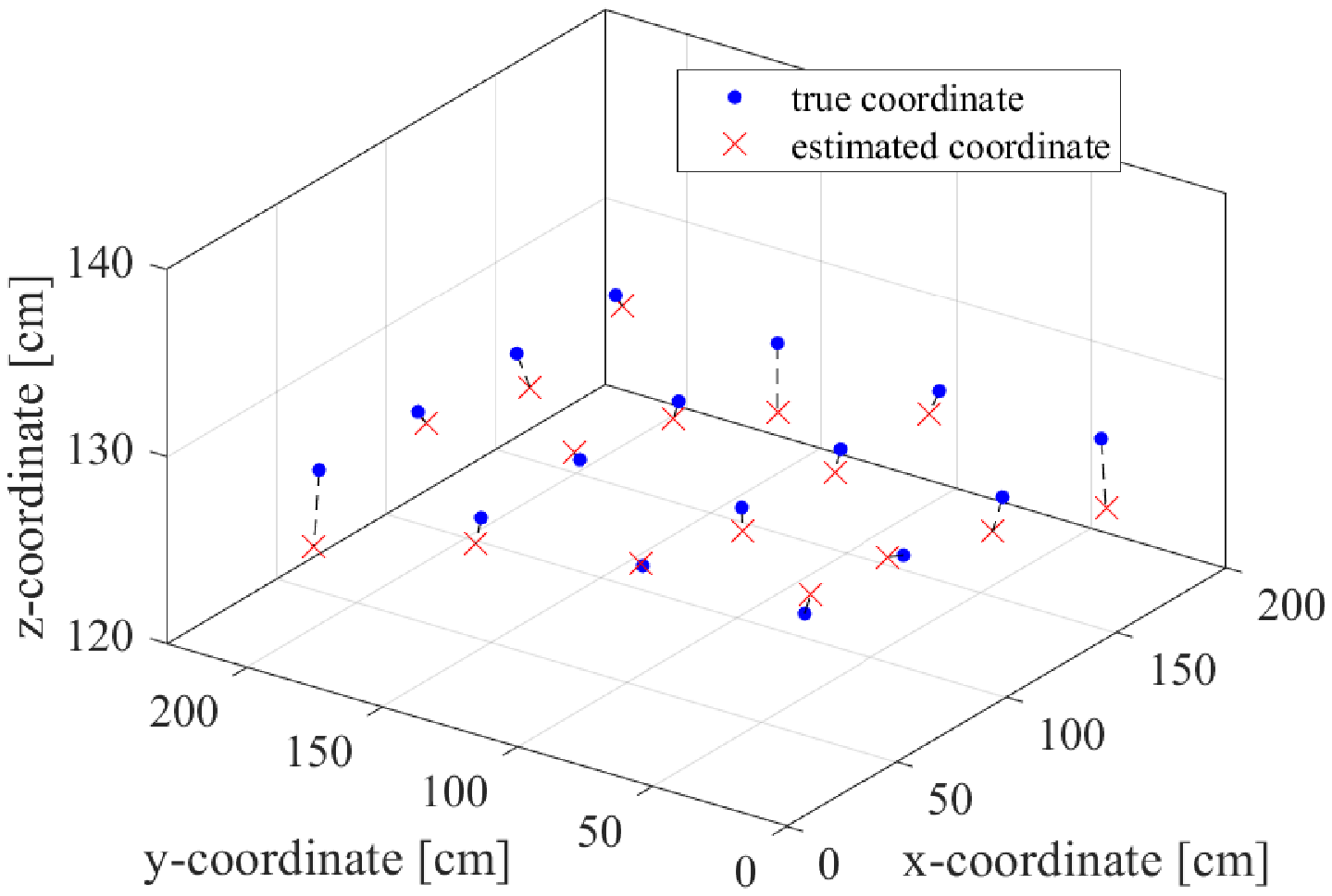}}
\subfigure[The user is tilted ${{{15}^ \circ }}$ ]{\includegraphics[height=6cm,trim=0 0 0 0,clip]{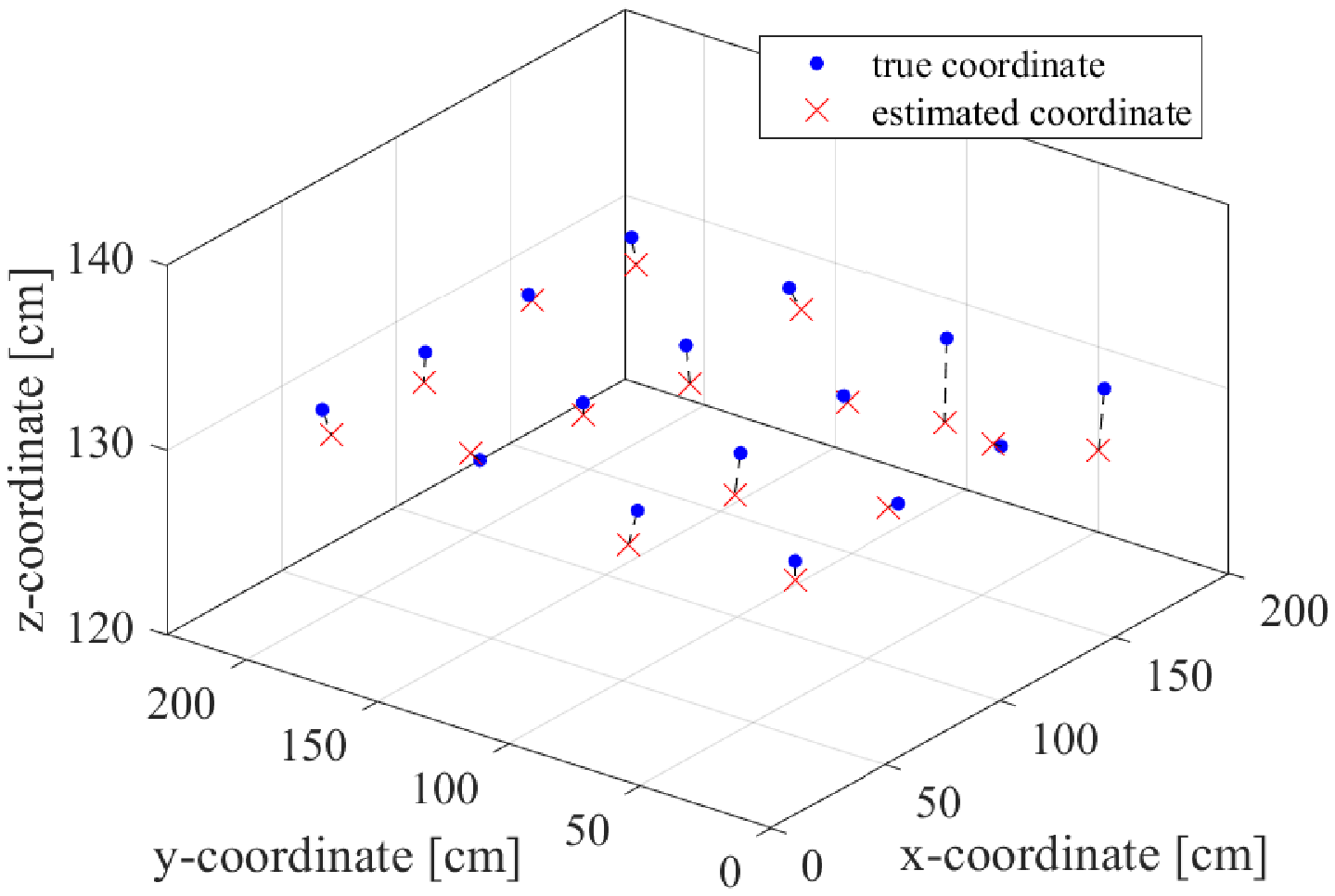}}
\subfigure[The user is tilted ${{{30}^ \circ }}$ ]{\includegraphics[height=6cm,trim=0 0 0 0,clip]{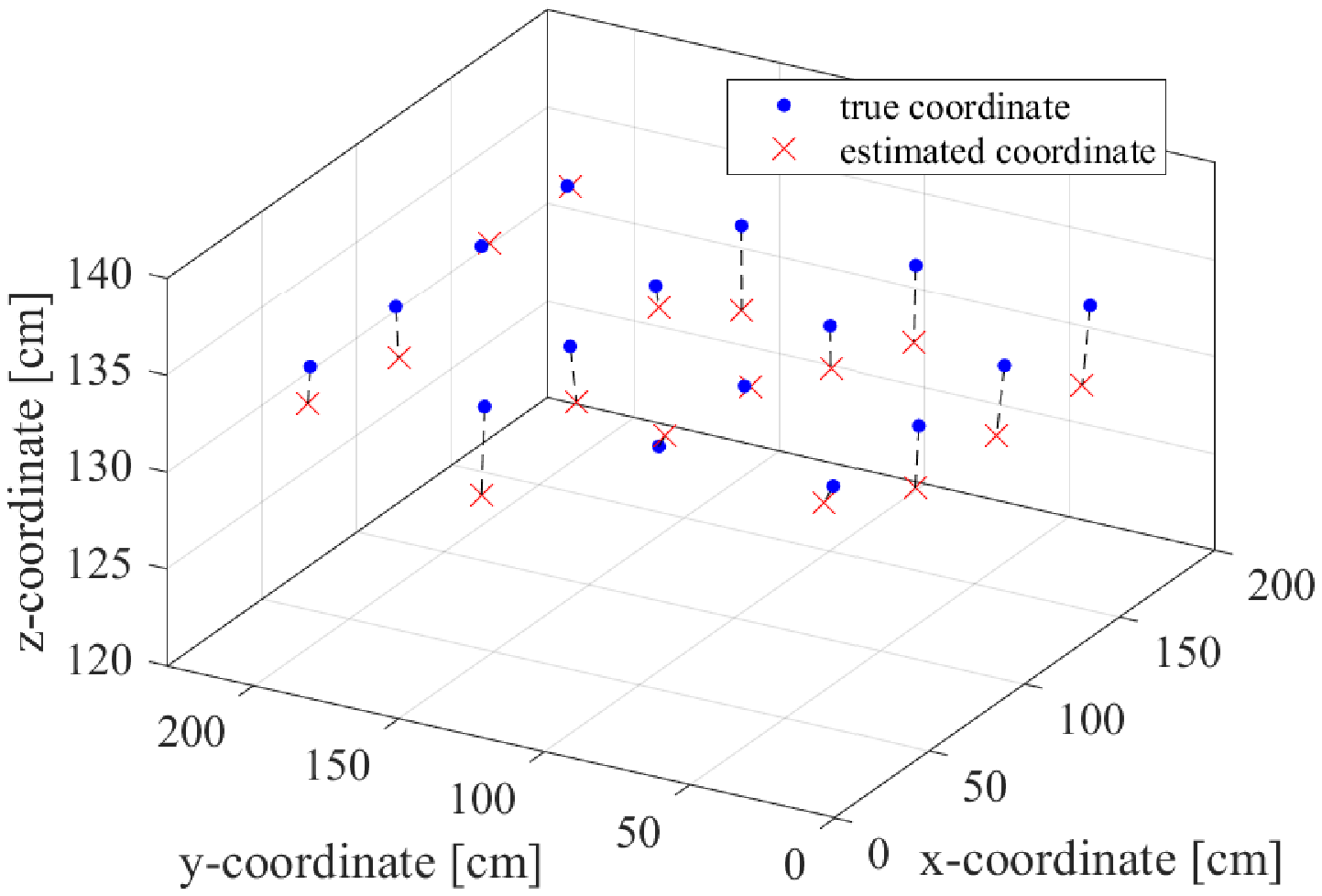}}
\caption{The accuracy performance for 3D positioning of V-PA.}\label{experimental_3D}
\end{figure}

This subsection evaluates the accuracy of V-PA via practical experiments. To verify the performance of V-PA, we estimate the locations of 16 points uniformly distributed in a 180 $\rm{cm}$ $\times$ 210 $\rm{cm}$ test area. The coordinates (in cm) of the reference points are (45, 30), (45, 90), (45, 150), (45, 210), (90, 30), (90, 90), (90, 150), (90, 210), (135, 30), (135, 90), (135, 150), (135, 210), (180, 30), (180, 90), (180, 150) and (180, 210). We evaluate 3D positioning accuracy of V-PA when the user is tilted or faces vertically upwards. For each test point, we locate the user with V-PA for 10 times, and average them as the final estimation result. Fig. \ref{experimental_3D} illustrates the positioning results of the user at different tilted angles, and the length of the dotted line that connects the true location and the corresponding estimated location denotes the magnitude of the positioning error. In Fig. \ref{experimental_3D} (a), the user faces vertically upwards, and the height of the user is 127 cm for all test points. From this figure, we observe that the minimum error and maximum are 0.87 cm and 6.82 cm, respectively. The average error of all test points is 3.37 cm. In Fig. \ref{experimental_3D} (b), we set the tilted angle of the user to about $15^\circ$ around $y^{\rm{w}}$-axis, and all test points are at the height of 130 cm. The minimum error, maximum error, and average error are 1.97 cm, 4.60 cm and 3.36 cm, respectively. Fig. \ref{experimental_3D} (c) shows the positioning result of V-PA when the user is tilted $30^\circ$ and its height is 133 cm. It can be observed that the minimum error, maximum error, and average error are 1.33 cm, 5.89 cm and 3.47 cm, respectively.
The averages errors under the three tilted angles are all less than 5 cm, which indicates that  V-PA is efficient when the user is tilted. Therefore, the proposed V-PA algorithm can achieve stable centimeter-level accuracy performance. The experimental results also verify the feasibility of V-PA algorithm in practice.

\section{Conclusion}
In this paper, we have proposed a practical V-PA approach for indoor positioning, which can achieve pose and location estimation of the user using circular luminaires. The proposed approach does not need IMU and has no tilted angle limitations at the user.
In particular, we have first developed an algorithm called V-PCA, in which, the geometric features extracted from a complete circular luminaire and an incomplete one have been developed for the pose and location estimation.
Furthermore, we have proposed the OA-V-PA algorithm to enhance the practicality and robustness when part VLC links are blocked.
We have also established a prototype, in which a fused image processing scheme was proposed to simultaneously obtain VLC signals and geometric information.
Simulation results show that V-PA can achieve a 90th percentile positioning accuracy of around 10 cm. The experimental results also show that the average location error is less than 5 cm whether the user is tilted or vertically upward. Therefore, V-PA is promising for indoor positioning, which is practical and suitable for common indoor scenarios.

\bibliographystyle{IEEEtran}
\bibliography{Citation_VLP}

\end{document}